\documentclass[12pt,american,pra,reprint,superscriptaddress,nofootinbib]{revtex4-2}
\usepackage{amsmath}
\usepackage{dsfont,mathrsfs,bm,xcolor,amsthm}
\usepackage{amstext}
\usepackage{breakurl}
\usepackage{braket}
\usepackage[unicode=true,pdfusetitle,
bookmarks=true,bookmarksnumbered=false,bookmarksopen=false,
breaklinks=false,pdfborder={0 0 0},pdfborderstyle={},backref=false,colorlinks=true]
{hyperref}

\theoremstyle{plain}
\newtheorem{thm}{Theorem}[section]
\newtheorem{prop}[thm]{Proposition}
\newtheorem{coro}[thm]{Corollary}
\newtheorem{lem}[thm]{Lemma}

\theoremstyle{definition}
\newtheorem{defn}[thm]{Definition}
\newtheorem{exmp}[thm]{Example}

\begin{document}

\title[The operational foundations of PT-symmetry]{The operational foundations of PT-symmetric and quasi-Hermitian quantum theory}

\author{Abhijeet Alase}
\email{abhijeet.alase1@ucalgary.ca}
\affiliation{Department of Physics \& Astronomy, University of Calgary, Calgary, AB T2N 1N4, Canada}
\affiliation{Institute for Quantum Science and Technology, University of Calgary, Calgary, AB T2N 1N4, Canada}

\author{Salini Karuvade}
\affiliation{Department of Physics \& Astronomy, University of Calgary, Calgary, AB T2N 1N4, Canada}
\affiliation{Institute for Quantum Science and Technology, University of Calgary, Calgary, AB T2N 1N4, Canada}

\author{Carlo Maria Scandolo}
\email{carlomaria.scandolo@ucalgary.ca}
\affiliation{Department of Mathematics \& Statistics, University of Calgary, Calgary, AB T2N 1N4, Canada}
\affiliation{Institute for Quantum Science and Technology, University of Calgary, Calgary, AB T2N 1N4, Canada}

\begin{abstract}
PT-symmetric quantum theory was originally proposed with the aim of extending standard quantum theory
by relaxing the Hermiticity constraint on Hamiltonians.
However, no such extension has been formulated that consistently
describes states, transformations, measurements and composition, 
which is a requirement for
any physical theory. 
We aim to answer the question of whether a consistent physical theory 
with PT-symmetric observables extends standard quantum theory.
We answer this question within the framework of general probabilistic theories,
which is the most general framework
for physical theories.
We construct the set of states of a system that result from 
imposing PT-symmetry on the set of observables, and show that
the resulting theory allows only one trivial state.
We next consider the constraint of quasi-Hermiticity on  observables, 
which guarantees the unitarity of evolution under a Hamiltonian with unbroken PT-symmetry.
We show that 
such a system is equivalent to a standard quantum system. 
Finally, we show that if all observables are quasi-Hermitian as well as PT-symmetric, 
then the system is equivalent to a
real quantum system.
Thus our results show that neither PT-symmetry nor quasi-Hermiticity constraints are 
sufficient to extend standard quantum theory consistently.
\end{abstract}

\maketitle

\section{Introduction}
In standard quantum theory, 
the observables of a system are constrained to be Hermitian operators in order 
to guarantee real and well-defined expectation values.
PT-symmetric quantum theory was originally proposed with the aim of extending standard quantum theory
by relaxing the assumption of Hermiticity on the Hamiltonian \cite{BB98,BBJ02,BBJ03,Ben05}.
In particular, the development of PT-symmetric quantum theory was motivated by the observation that 
a Hamiltonian possesses real energy values if 
the Hamiltonian and its eigenvectors are invariant under an antilinear PT-symmetry.
If, on the one hand PT-symmetric quantum theory has 
witnessed numerous theoretical~\cite{BBJM07,Mos09,MECM08,BBC+13,VMA16,KS20,YV21,KAS22} and experimental advances in the recent years~\cite{RME+10,SLZ+11,BDG+12,POS+14,ZZS+16,XZB+17,EMK+18,WLG+19,ZLW+20,ZWX+20}, 
on the other hand, an operational foundation for PT-symmetric 
quantum theory that consistently extends standard quantum theory has not been formulated.
The absence of such a consistent extension has led to disputable proposed
applications of PT-symmetry that
contradict established information-theoretic principles including the no-signalling principle,
faster-than-Hermitian evolution of quantum states,
and the invariance of entanglement under local operations~\cite{Pat14,CCC14,YHFL14,Cro15}.
In this article, 
we answer the question of whether a consistent physical
theory with PT-symmetric observables that extends standard quantum theory can be found. 
We answer this question in the negative using the framework of general probabilistic theories (GPTs)~\cite{Hardy-informational-1,Barrett,Chiribella-purification,hardy2011,Barnum-2,Janotta-Hinrichsen,Barnum2016,ScandoloPhD}.

Formulating PT-symmetric quantum theory as a self-consistent physical theory has been a long-standing research problem\footnote{Here we review only the works that deal with the consistency of 
(first-quantized) PT-symmetric quantum theory. We remark that 
 self-consistent formulations of
PT-symmetric quantum field theories have also been investigated~\cite{AEM20,AEM22},
but they are outside the scope of the present work.}.
Efforts to construct a physical theory involving PT-symmetric Hamiltonians can be
divided into two broad categories: the quasi-Hermitian formulation for unbroken PT-symmetry \cite{Mos03a,Mos10a,Mos10b,Zno15,Mos18,KAS22} and
the Krein-space formulation \cite{Jap02,Tan06a,Tan06b,Mos06,Zel11,CCID17}.
The quasi-Hermitian approach shows that
a physical system with an unbroken PT-symmetric Hamiltonian is equivalent to a
standard quantum system, and therefore this approach does not extend standard quantum mechanics.
The Krein space formulation attempts to extend standard quantum theory to 
include PT-symmetric quantum theory, 
but this approach has not succeeded in formulating
a self-consistent physical theory. We next discuss both these approaches 
and their shortcomings.

A PT-symmetric Hamiltonian that is not Hermitian
leads to non-unitary time evolution, and, consequently,
the system violates the conservation of total probability~\cite{Zno01,BBJ02,Ben05}.
 This problem was initially circumvented by introducing a new inner product
on the Hilbert space, referred to as ``CPT inner product'', with respect to which the PT-symmetric Hamiltonian
is Hermitian~\cite{BBJ02}. Note that the CPT inner product depends on the Hamiltonian of the system as well as the PT-symmetry.
The evolution of the system is then unitary with respect to this new inner product, and therefore
conserves probability.
This  approach motivated the further development of quasi-Hermitian quantum theory, 
whereby one introduces a different inner product from the standard one.
Such a different inner product had previously been used to study systems modelled by effective non-Hermitian Hamiltonians~\cite{SGH92},
but its application to the search for extensions of quantum mechanics was driven by
the field of PT-symmetry.
According to quasi-Hermitian quantum theory,
a closed physical system with a
quasi-Hermitian Hamiltonian 
can generate unitary time evolution if the system dynamics is
considered on a modified Hilbert space with a 
Hamiltonian-dependent inner product~\cite{Mos03a,Mos10a,Mos10b,Zno15,Mos18,ZWG19,JMCN19}.
Operational foundations of quasi-Hermitian quantum theory 
and the equivalence of the resulting theory to standard quantum theory 
 follow from the fact that quasi-Hermitian observables
form a C$^*$-algebra that is isomorphic to the C$^*$-algebra of
Hermitian observables~\cite{KAS22}.
Every unbroken PT-symmetric Hamiltonian is quasi-Hermitian with respect to a suitably modified inner product~\cite{Mos03a},
and therefore, a physical system with an unbroken PT-symmetric Hamiltonian
is equivalent to a standard quantum system.

Thus, the quasi-Hermitian approach to unbroken PT-symmetry does not extend standard quantum theory. 
In fact, the equivalence of unbroken PT-symmetric systems to standard quantum systems 
has been used to successfully refute
the claims involving applications of PT-symmetry that contradicted information-theoretic principles~\cite{Mos03a,GHS08,Mos10a,Mos10b,Zno15,Mos18,ZWG19}. 
In addition to not providing an extension to standard quantum mechanics, 
it is to be noted that the quasi-Hermitian approach bypasses the original idea of 
introducing PT-symmetry.
The allowed set of observables in quasi-Hermitian quantum theory 
are only required to be Hermitian with respect to
the modified inner product, and they do not have to satisfy the PT-symmetry, if any, of the system Hamiltonian.
Therefore, quasi-Hermitian quantum theory is constructed by
actually replacing the constraint of PT-symmetry with that of quasi-Hermiticity.

The Krein space approach aims to establish a self-consistent formulation of 
PT-symmetric quantum theory, for both broken and unbroken PT-symmetric Hamiltonians.
In this formulation, the set of allowed states in PT-symmetric quantum theory form a Krein space,
which is a vector space
equipped with an indefinite inner product derived from PT-symmetry~\cite{Jap02,Tan06a,Tan06b,Mos06,Zel11,CCID17}.
The indefiniteness of the inner product imposes further restrictions on the theory,
going beyond the original requirement of PT-symmetric invariance for the Hamiltonian, 
such as a superselection rule prohibiting superposition of states from certain subspaces
and the calculation of measurement probabilities being restricted to these subspaces. 
Despite these restrictions, whether the resulting theory is self-consistent remains an open question. 
As an operational interpretation of this theory has not been investigated yet,
the question of whether it extends standard quantum mechanics cannot be answered at this stage.
Moreover, the Krein-space formulation is only applicable to 
PT-symmetric Hamiltonians that are Schr\"{o}dinger operators, and therefore does not encompass
finite-dimensional systems.

 In this article, we first show that 
the only consistent way to construct PT-symmetric quantum theory
with unbroken PT-symmetric observables, 
without any Hermiticity or quasi-Hermiticity constraint, 
is by assigning a single, trivial state with every physical system.
This result shows that PT-symmetry alone is too weak a constraint on the
set of observables to construct a non-trivial physical theory. 
We therefore investigate
the consequences of imposing different constraints related to PT-symmetry on the set of observables.
A prime candidate for such a constraint is quasi-Hermiticity, which 
has been studied in the context of unbroken PT-symmetry, as mentioned above. 
We show that if quasi-Hermiticity is the only constraint on the observables, 
then the resulting system is mathematically equivalent to a standard quantum system,
thereby recovering the results of Refs.~\cite{Mos10a,Mos10b,Zno15,Mos18,ZWG19,KAS22}
in a rigorous operational framework. However, we eliminate 
the assumption that pure states in the new theory constitute a Hilbert space, 
as done, instead, in the existing literature.
Finally, we consider the
setting in which \emph{all} observables are quasi-Hermitian as well as PT-symmetric,
and show that the resulting system is equivalent to a
real quantum system \cite{Wootters-real,Hardy-real,Baez,Hickey,Barnum2020composites,Scandolo-real1,Scandolo-real2,Acin-complex}.
Our results show that neither PT-symmetry nor quasi-Hermiticity constraints are 
sufficient to extend standard quantum theory consistently.

Our results are derived by applying 
the foundational and rigorous framework of GPTs \cite{Hardy-informational-1,Barrett,Chiribella-purification,hardy2011,Barnum-2,Janotta-Hinrichsen,Barnum2016,ScandoloPhD} to 
non-Hermitian quantum theory. 
GPTs are a framework where one only assumes that the theory is
probabilistic; as such, GPTs can accommodate theories beyond quantum theory. 
This framework is routinely
applied to studying foundational aspects of quantum theory and other theories \cite{Chiribella-informational,Hardy-informational-2,Brukner,Masanes-physical-derivation,Barnum-interference,chiribella2016quantum,ScandoloPhD,Reconstruction}.
In particular, it is possible to define probabilities associated with the measurement of physical observables via a duality between states and basic effects of a theory.
Here we show how different constraints imposed on the observables of the system, 
such as PT-symmetry and quasi-Hermiticity, 
 allow us to characterize the set of valid states of the system. This set is then compared to the one of standard quantum theory to check if PT-symmetric or quasi-Hermitian constraints provide an actual extension of quantum theory.

The organization of this article is as follows.  
In Section~\ref{sec:background} we present the background material on PT-symmetric Hamiltonians,
quasi-Hermitian Hamiltonians 
and GPTs. 
Section~\ref{sec:GPT-PT} contains the GPT-treatment of a theory where the only constraint on observables and effects is unbroken PT-symmetry. 
In Section~\ref{sec:quasi-hermitian}, we discuss the mathematical equivalence of
quasi-Hermitian quantum systems with standard quantum systems. 
Finally, Section~\ref{sec:PT+quasi} contains a discussion of 
how real quantum systems emerge from the combination of PT-symmetric and quasi-Hermitian constraints. 
Conclusions are drawn in Section~\ref{sec:conclusions}.

\section{Notation and background}
\label{sec:background}
In the following discussion, we denote a system by A.
In subsections~\ref{subsec:PT-symmetry} and~\ref{subsec:quasi-hermitian},
we associate the Hilbert space $\mathscr{H}$ with A.
The inner product defined in $\mathscr{H}$ will be denoted by $\left\langle\bullet \middle|\bullet \right\rangle$. For convenience, we will assume $\mathscr{H}$ to be finite-dimensional, as this is the usual scenario in which the GPT framework is applied. The set $\mathcal{B}\left( \mathscr{H}\right)$ comprises the bounded linear operators acting
on $\mathscr{H}$.

\subsection{PT-symmetric quantum theory}\label{subsec:PT-symmetry}

In this section, we review the basics of PT-symmetric quantum theory.
PT-symmetric quantum theory replaces the Hermiticity constraint on observables
in standard quantum mechanics by the physically-motivated constraint of invariance under PT-symmetry.
The operator PT acting on the Hilbert space $\mathscr{H}$ is assumed to be the
composition of P, a linear operator and T, an antilinear operator, 
such that their combined action is an antiunitary involution on $\mathscr{H}$.

\begin{defn}\label{def:antiinvo}
A (linear or antilinear) operator $M$ acting on the Hilbert space $\mathscr{H}$ is an \emph{involution}
if it satisfies $M^2=\mathds{1}.$
\end{defn}

Any antiunitary operator that is also an involution is a valid PT-symmetry, and 
can be used to construct an instance of PT-symmetric quantum theory. For a given choice of PT-symmetry,
the time evolution of a physical system A is described by a 
Hamiltonian $ H_{\rm PT}\in\mathcal{B}\left(\mathscr{H}\right)$
that is not necessarily Hermitian, but it is invariant under the
action of PT. That is,
\begin{equation}\label{eq:PTcommute}
   \left[ H_{\rm PT},\mathrm{PT}\right] = 0.
\end{equation}

\noindent
Even if $ H_{\rm PT}$ is diagonalizable,
Eq.~\eqref{eq:PTcommute} is not sufficient to guarantee that $H_{\rm PT}$ and $\mathrm{PT}$ share an eigenbasis,
as PT is an antilinear operator.
This brings us to the definition of an unbroken PT-symmetric Hamiltonian.
\begin{defn}[Adapted from~\cite{BBJ02}]\label{def:unbroken}
A Hamiltonian $ H_{\rm PT}$ is called an \emph{unbroken PT-symmetric Hamiltonian},
or in other words $ H_{\rm PT}$ is said to possess \emph{unbroken PT-symmetry}, 
if $ H_{\rm PT}$ is diagonalizable, and all the eigenvectors of $ H_{\rm PT}$ are invariant under the action of PT. 
\end{defn}

As a consequence of this definition, an unbroken PT-symmetric Hamiltonian possesses real spectrum.
If not all eigenvectors of $ H_{\rm PT}$ are invariant under the action of PT, as is the case for
Hamiltonians with broken PT-symmetry, then the spectrum of $H_{\rm PT}$
could consist of complex eigenvalues that arise in complex conjugate pairs (see e.g.\ Ref.~\cite{Mos02}). 
Instead, non-diagonalizable PT-symmetric Hamiltonians describe exceptional points, characterized by coalescence of
one or multiple pairs of eigenvalues and eigenvectors~\cite{DDT01}.

\begin{exmp}
Consider $\mathrm{P} = \sigma_x$ and $\mathrm{T} = \kappa$, where $\sigma_x$ is the Pauli $x$ matrix, and $\kappa$ denotes  
complex conjugation. Then the Hamiltonian 
\begin{equation}
\label{2x2}
    H_{\rm PT} = \begin{pmatrix}r\mathrm{e}^{\mathrm{i}\theta} & s \\ s & r\mathrm{e}^{-\mathrm{i}\theta}\end{pmatrix},
    \quad r,s,\theta \in \mathds{R}
\end{equation}
commutes with the the product $\mathrm{PT}$, i.e.\ $H_{\rm PT}$ is PT-symmetric~\cite{BBJ02}.
The eigenvalues of $H_{\rm PT}$ are $\lambda_{\pm} = r\cos\theta\pm\sqrt{s^2-r^2\sin^2\theta}$,
which are real for $s^2\geq r^2\sin^2\theta$. Furthermore, note that if $s^2 > r^2\sin^2\theta$, $H_{\rm PT}$
is diagonalizable, and the corresponding
eigenvectors of $H_{\rm PT}$
can be chosen to be 
\begin{align}\label{eq:eigenvector examples}
    \ket{\lambda_+} &= \frac{1}{\sqrt{2\cos\alpha}}\begin{pmatrix}\mathrm{e}^{\mathrm{i}\alpha/2}\\ \mathrm{e}^{-\mathrm{i}\alpha/2}\end{pmatrix}\\
    \ket{\lambda_-} &= \frac{\mathrm{i}}{\sqrt{2\cos\alpha}}\begin{pmatrix}\mathrm{e}^{-\mathrm{i}\alpha/2}\\ -\mathrm{e}^{\mathrm{i}\alpha/2}\end{pmatrix},\label{eq:eigenvector examples2}
\end{align}
where $\alpha\in\left(-\frac{\pi}{2},\frac{\pi}{2}\right)$ is such that $\sin\alpha := \left(r/s\right)\sin\theta$ 
(note that such an $\alpha$ exists for the parameter values satisfying $s^2 > r^2\sin^2\theta$).
The eigenvectors $\left\{\ket{\lambda_+},\ket{\lambda_-}\right\}$ 
are easily verified to be eigenvectors of the antilinear operator $\mathrm{PT}$. 
Hence, $H_{\rm PT}$ 
in this parameter regime displays unbroken PT-symmetry. 
For $s = \pm r\sin\theta$, the Hamiltonian $H_{\rm PT}$ becomes non-diagonalizable, and this parameter regime is called exceptional
point. For $s^2<r^2\sin^2\theta$, the Hamiltonian
$H_{\rm PT}$ has complex eigenvalues, and thus enters a broken PT-symmetric phase.
\end{exmp}

\subsection{Quasi-Hermitian quantum theory}\label{subsec:quasi-hermitian}
In this section, we discuss the fundamental concepts in  quasi-Hermitian quantum theory
and the relation of the theory to PT-symmetric Hamiltonians.
We begin with the definition of a quasi-Hermitian operator, 
which forms the basis of this theory.
\begin{defn}~\label{def:qH}
Let $\eta\in \mathcal{B}\left(\mathscr{H}\right)$
be a positive definite operator. 
An operator $M\in\mathcal{B}\left(\mathscr{H}\right)$ is \emph{quasi-Hermitian} with respect to the \emph{metric operator} $\eta$, or $\eta$-\emph{Hermitian},
if it satisfies the condition 
\begin{equation}\label{eq:qH}
    \eta M \eta^{-1} =M^\dagger.
\end{equation}
\end{defn}

In quasi-Hermitian quantum theory, the dynamics of a system A is generated by 
a non-Hermitian Hamiltonian $H_{\rm QH}\in\mathcal{B}\left(\mathscr{H}\right)$
that is quasi-Hermitian~\cite{Mos02}.
Consequently, the evolution generated by $H_{\rm QH}$ is not unitary.
In particular, a quasi-Hermitian Hamiltonian generates a quasi-unitary evolution in $\mathscr{H}$.
We define a quasi-unitary evolution below.

\begin{defn}~\label{def:qU}
An operator $M\in\mathcal{B}\left(\mathscr{H}\right)$ is \emph{quasi-unitary} or $\eta$-\emph{unitary}
if it satisfies the condition 
\begin{equation*}
    M^\dagger \eta M =\eta.
\end{equation*}
\end{defn}
\noindent Clearly, a closed system A undergoing a quasi-unitary evolution
violates the conservation of probability in $\mathscr{H}$.
However, in quasi-Hermitian quantum theory,
unitarity of evolution is restored by modifying the system Hilbert space to
 $\mathscr{H}_\eta$,
consisting of the underlying vector space $\mathscr{V}$ with a modified inner product 
$\left\langle\bullet\middle|\bullet\right\rangle_\eta$ given by 
\begin{equation}
\label{eq:etaIP}
   \left\langle\phi\middle|\psi\right\rangle_\eta := \left\langle\phi\middle|\eta\middle|\psi\right\rangle,\quad\forall\ket{\phi},\ket{\psi}\in\mathscr{V}.
\end{equation}
We refer to this modified inner product as the \emph{$\eta$-inner product}. In this setting, $\eta$-unitary operators preserve the $\eta$-inner product.
It is easy to verify that $H_{\rm QH}$ acts as a Hermitian operator on $\mathscr{H}_\eta$~\cite{SGH92,Mos02}.
Any valid observable in A is also required to satisfy Eq.~\eqref{eq:qH},
or equivalently be represented by a Hermitian operator
in $\mathcal{B}\left(\mathscr{H}_\eta\right)$, in this theory.
Furthermore, the isomorphism between $\mathscr{H}$ and $\mathscr{H}_\eta$ implies that
a closed quasi-Hermitian system with Hamiltonian $H_{\rm QH}$ is mathematically equivalent to a 
standard quantum system, provided the dynamics of the former is described in $\mathscr{H}_\eta$. 
In other words, Hamiltonians that are  $\eta$-Hermitian in $\mathscr{H}$ generate unitary evolution in~$\mathscr{H}_\eta$.

Eq.~\eqref{eq:qH} is a necessary and sufficient condition for any diagonalizable
operator in $\mathcal{B}\left(\mathscr{H}\right)$ to possess a real  spectrum~\cite{Mos03a}.
Consequently, any Hamiltonian $H_{\rm PT}$ with unbroken PT-symmetry
satisfies Eq.~\eqref{eq:qH} for some metric operator $\eta$.

Below we give an example of such a metric operator for an unbroken PT-symmetric Hamiltonian.
\begin{exmp}
The Hamiltonian $H_{\rm PT}$ in Eq.\,\eqref{2x2}
is Hermitian with respect to the inner product
\begin{equation}
\label{CPT}
    \braket{\psi|\phi}_{\rm CPT} = \left(\mathrm{CPT}\ket{\psi}\right)\cdot \ket{\phi},
\end{equation}
where, for $\alpha\in \left(-\frac{\pi}{2},\frac{\pi}{2}\right)$,
\[
    \mathrm{C} = \frac{1}{\cos\alpha}\begin{pmatrix}\mathrm{i}\sin\alpha & 1 \\ 1 & -\mathrm{i}\sin\alpha
    \end{pmatrix}
\]
is the ``Charge'' operator~\cite{BBJ02}, where $\alpha$ is defined as in Eqs.~\eqref{eq:eigenvector examples} and \eqref{eq:eigenvector examples2}, and
$a\cdot b = \sum_{j}a_jb_j$ denotes the dot product. 
The easiest approach to verify that $H_{\rm PT}$ is Hermitian with respect to
the inner product in Eq.~\eqref{CPT} is by considering the action of $\mathrm{C}$ and $\mathrm{PT}$
on the eigenvectors of $H_{\rm PT}$. Using the explicit form of the eigenvectors of 
$H_{\rm PT}$ in Eqs.~\eqref{eq:eigenvector examples} and \eqref{eq:eigenvector examples2}, it is straightforward to see that  
$\mathrm{C}\ket{\lambda_{\pm}} = \pm\ket{\lambda_{\pm}}$. We already know that
$\mathrm{PT}\ket{\lambda_{\pm}} = \ket{\lambda_{\pm}}$.
The normalization factor $1/{\sqrt{2\cos\alpha}}$ in Eqs.~\eqref{eq:eigenvector examples} and \eqref{eq:eigenvector examples2}
ensures that 
$\mathrm{C}\mathrm{PT}\ket{\lambda_{\pm}}\cdot\ket{\lambda_{\pm}} = 1$ and $\mathrm{C}\mathrm{PT}\ket{\lambda_{\pm}}\cdot \ket{\lambda_{\mp}} = 0$.
Thus, the eigenvectors of $H_{\rm PT}$ are orthonormal with respect to the inner product in Eq.~\eqref{CPT}.
Having already proven the reality of the eigenvalues of $H_{\rm PT}$, 
we conclude that $H_{\rm PT}$ is Hermitian with respect to the inner product in Eq.~\eqref{CPT}.
Finally, we can once again use the action of $\mathrm{C}$ and $\mathrm{PT}$ to show that $\mathrm{C}\mathrm{PT} = \mathrm{PT}\mathrm{C}=\mathrm{TPC}$,
where we also 
used $\mathrm{PT}=\mathrm{TP}$ in the latter step. Now, the inner product in Eq.~\eqref{CPT} can be re-expressed as
\begin{equation}
    \braket{\psi|\phi}_{\rm CPT} = \braket{\psi|\mathrm{PC}|\phi},
\end{equation}
which is equivalent to $\eta$-inner product 
with $\eta = \mathrm{PC}$. 
\end{exmp}

In contrast to the requirement of PT-symmetry on the Hamiltonian, 
the observables of the quasi-Hermitian system associated with $H_{\rm PT}$ are not 
traditionally required to be invariant under the same PT-symmetry.
Instead, observables are required to be quasi-Hermitian with respect to the CPT inner product,
or to another inner product which makes $H_{\rm PT}$ Hermitian\footnote{Ref.~\cite{BBJ02}
erroneously posited that the observables of a PT-symmetric system should satisfy CPT-symmetry, but
later clarified in the erratum that this condition should be replaced by $\eta$-Hermiticity
with $\eta = \mathrm{PC}$.}. This contrast in requirements
on the Hamiltonian and other observables is deemed to be necessary to maintain consistency,
so as to ensure reality of eigenvalues of the observables 
and unitarity of the evolution generated by the Hamiltonian. 
However, the reality of the eigenvalues of observables and unitarity of the evolution
are not fundamental to the consistency of every physical theory, although they are 
integral to the consistency of standard quantum mechanics.
In fact, a rigorous operational assessment
of the consequences of requiring observables to be PT-symmetric (and not quasi-Hermitian)
has never been carried out in the literature. This is exactly the starting point of our 
analysis in \S \ref{sec:GPT-PT}.

\subsection{General probabilistic theories}\label{subsec:GPTs}
In this section, we review the basic structure of general probabilistic theories 
(GPTs)~\cite{Hardy-informational-1,Barrett,Chiribella-purification,hardy2011,Barnum-2,Janotta-Hinrichsen,Barnum2016,ScandoloPhD}.
There are different ways to introduce GPTs; here we opt for a minimalist treatment, inspired by Ref.~\cite{Randall-Foulis1}, which focuses on states and effects, the objects of interest of our analysis. Here effects indicate the mathematical objects associated with the various outcomes of the measurement of physical observables (possibly even generalized ones, such as those associated with POVMs in quantum theory).

For every system $\mathrm{A}$, we identify a set of basic effects $\mathsf{X}\left(\mathrm{A}\right)$, and a set of basic measurements $\mathsf{M}_\mathsf{X}\left(\mathrm{A}\right)$, which are particular collections of basic effects. We can think of 
any basic measurement to be associated with a certain physical observable. It is assumed that the basic measurements in  $\mathsf{M}_\mathsf{X}\left(\mathrm{A}\right)$ provide a covering of $\mathsf{X}$. A state $\mu$ of the system is a \emph{probability weight}, i.e.\ a function $\mu:\mathsf{X}\to \left[0,1\right]$ such that, for every basic measurement $\boldsymbol{m}\in \mathsf{M}_\mathsf{X}\left(\mathrm{A}\right)$, we have $\sum_{E\in \boldsymbol{m}}\mu \left(E\right) = 1$. In simpler words, a state assigns a probability to every measurement outcome: $\mu\left(E\right)$ is the probability of obtaining the outcome associated with $E$ if the state is $\mu$. The set of states of system $\mathrm{A}$, denoted by $\mathsf{St}\left(\mathrm{A}\right)$, is a convex set, because any convex combination of probability weights is still a probability weight.

Since states are real-valued functions, we can define linear combinations
of them with real coefficients: if $a,b\in\mathds{R}$ and $\mu,\nu\in\mathsf{St}\left(\mathrm{A}\right)$,
then $a\mu+b\nu$ is defined as
\[
\left(a\mu+b\nu\right)\left(E\right):=a\mu\left(E\right)+b\nu\left(E\right),
\]
for every $E\in\mathsf{X}\left(\mathrm{A}\right)$. In this way, states
span a real vector space, denoted as $\mathsf{St}_{\mathds{R}}\left(\mathrm{A}\right)$.
Hereafter, we assume that $\mathsf{St}_{\mathds{R}}\left(\mathrm{A}\right)$
is finite-dimensional. Note that basic effects can be regarded as particular linear
functionals on $\mathsf{St}_{\mathds{R}}\left(\mathrm{A}\right)$:
if $E\in\mathsf{X}\left(\mathrm{A}\right)$, then $E\left(\mu\right):=\mu\left(E\right)$,
where $\mu\in\mathsf{St}\left(\mathrm{A}\right)$. Similarly, one
can consider the real vector space spanned by basic effects, denoted
by $\mathsf{Eff}_{\mathds{R}}\left(\mathrm{A}\right)$. Note that
$\mathsf{St}_{\mathds{R}}\left(\mathrm{A}\right)$ is the dual space
of $\mathsf{Eff}_{\mathds{R}}\left(\mathrm{A}\right)$. Within  $\mathsf{Eff}_{\mathds{R}}\left(\mathrm{A}\right)$ one identifies the set of effects, $\mathsf{Eff}\left(\mathrm{A}\right)$, which comprises all elements of $\mathsf{Eff}_{\mathds{R}}\left(\mathrm{A}\right)$ that can arise in a physical measurement on the system, even if they are not basic effects. For physical consistency, effects in $\mathsf{Eff}\left(\mathrm{A}\right)$ must be such that states are still probability weights on them. Certain collections of effects that are not necessarily basic effects make up more general measurements than basic measurements. Yet the property $\sum_{E\in \boldsymbol{m}}\mu \left(E\right) = 1$ for a state $\mu$ still holds even when $\boldsymbol{m}$ is a general type of measurement. In this sense, every effect in $\mathsf{Eff}\left(\mathrm{A}\right)$ must be part of some measurement.

\begin{exmp}
In quantum theory, for every system, basic effects can be taken
to be rank-1 orthogonal projectors, in which case basic measurements
are all the collections of rank-1 projectors that sum to the identity.
Basic effects span the vector space of Hermitian matrices. 
The set of effects is the set of POVM elements, namely operators $E$ such that $\bm{0}\leq E \leq \mathds{1}$, and measurements are all POVMs.
 With the formalism
presented above, states are particular linear functionals on the vector
space spanned by basic effects. According to a theorem by Gleason \cite{gleason},
they are of the form $\mathrm{tr}\:\rho\bullet$, where $\rho$ is
any positive semidefinite matrix with trace 1 (density matrix), and
$\bullet$ is a placeholder for a basic effect. In
other words, there is a bijection between quantum states and density
matrices. This is the reason why quantum states are commonly defined
as density matrices, forgetting their nature as linear functionals.
\end{exmp}

Two states (resp.\ two effects) are equal if their action on all
effects (resp.\ states) is the same. In this way, it is possible
to show that the effects in all basic measurements sum to the same
linear functional $u$, known as \emph{unit effect} or \emph{deterministic
effect}. Indeed, if $\boldsymbol{m}$ and $\boldsymbol{m}'$ are two
basic measurements, and $\mu$ is a state,
\[
\sum_{E\in\boldsymbol{m}}E\left(\mu\right)=\sum_{E\in\boldsymbol{m}}\mu\left(E\right)=1=\sum_{F\in\boldsymbol{m}'}\mu\left(F\right)=\sum_{F\in\boldsymbol{m}'}F\left(\mu\right).
\]
Thus, $\sum_{E\in\boldsymbol{m}}E=\sum_{F\in\boldsymbol{m}'}F$. This fact guarantees that the principle of Causality is in force,
so the theory cannot have signalling in space and time \cite{Chiribella-purification}.
\begin{exmp}
In quantum theory, the unit effect is the identity, as all rank-1 projectors in a basic measurement sum to the identity.
\end{exmp}

In this framework, in any theory we can define a physical observable $O$ mathematically, starting from the basic measurement $\boldsymbol{m}=\left\{E_1,\dots,E_s\right\}$ associated with it. Let $\left\{\lambda_1,\dots,\lambda_s\right\}$ be the (possibly equal) values of the observable that can be found after a measurement, where $\lambda_j$ is the value associated with the $j$th outcome, i.e.\ with the effect $E_j$. Then we can represent $O$ as a linear combination of the basic effects  in $\boldsymbol{m}$, where the coefficients are its values (cf.\ \cite{ScandoloPhD}):
\begin{equation*}
    O=\sum_{j=1}^s \lambda_j E_j.
\end{equation*}
Therefore, from the mathematical point of view, observables are particular elements of $\mathsf{Eff}_{\mathds{R}}\left(\mathrm{A}\right)$. In this way, it is possible to define the expectation value of the observable $O$ on the state $\mu$ as
\begin{equation}\label{eq:expectation value}
    \left\langle O\right\rangle_\mu=\mu\left(O\right)=\sum_{j=1}^s \lambda_j \mu\left(E_j\right).
\end{equation}
Given that $\mu\left(E_j\right)$ represents the probability of obtaining $E_j$, i.e.\ of obtaining the value $\lambda_j$, the meaning of Eq.~\eqref{eq:expectation value} as an expectation value is clear. This also shows the tight relationship between observables and effects, which implies that any constraint imposed on observables of a theory can be viewed directly as a constraint on the effects of the theory. We will see an example of this in Proposition~\ref{prop:PTspectral}.
\begin{exmp}
In quantum theory, observables are indeed linear combinations of basic effects. This can be seen as a consequence of the fact that observables in finite-dimensional systems are represented by Hermitian matrices: in this way, every observable is diagonalizable, and therefore it can be written as a linear combination of its spectral projectors with the coefficients being its eigenvalues, i.e.\ the values that can be found in a measurement. In turn, every spectral projector can be written as a sum of rank-1 projectors, so every quantum observable can be written as a linear combination of basic effects that make up a basic measurement (they sum to the identity).
\end{exmp}

We end this section by defining the meaning of equivalence between two physical systems. 
\begin{defn}\label{def:equivalence}
Let $\mathrm{A}$ and $\mathrm{B}$ be two physical systems. 
We say that $\mathrm{A}$ is \emph{equivalent} to $\mathrm{B}$ if there exists a linear bijection
$\mathcal{T}:\mathsf{Eff}\left(\mathrm{A}\right) \to \mathsf{Eff}\left(\mathrm{B}\right)$ such that $\mathcal{T}\left(u_{\mathrm{A}}\right)= u_{\mathrm{B}}$, where $u$ denotes the unit effect.
\end{defn}
Note that such a $\mathcal{T}$ can be extended by linearity to become an isomorphism of the vector spaces $\mathsf{Eff}_{\mathds{R}}\left(\mathrm{A}\right)$ and $\mathsf{Eff}_{\mathds{R}}\left(\mathrm{B}\right)$, because such vector spaces are spanned by effects. The existence of a linear bijection between effect spaces implies the existence of a linear bijection between state spaces, which justifies the equivalence at an even stronger level.
\begin{lem}\label{lem:duality}
If $\mathrm{A}$ and $\mathrm{B}$ are equivalent, then there exists a linear bijection $\mathcal{T}':\mathsf{St}\left(\mathrm{B}\right) \to \mathsf{St}\left(\mathrm{A}\right)$.
\end{lem}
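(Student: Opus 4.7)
The plan is to exploit the duality between states and effects: since states can be viewed as linear functionals on $\mathsf{Eff}_{\mathds{R}}(\mathrm{A})$, any linear isomorphism $\mathcal{T}$ between effect spaces induces a linear isomorphism between the dual spaces by pre-composition. The map I would use is
\[
\mathcal{T}'(\nu)(E) := \nu\left(\mathcal{T}(E)\right), \qquad \nu \in \mathsf{St}(\mathrm{B}),\; E \in \mathsf{Eff}(\mathrm{A}),
\]
which is the transpose of $\mathcal{T}$ restricted to state spaces. Linearity of $\mathcal{T}'$ is then immediate from the linearity of $\nu$ and $\mathcal{T}$, and it bestows $\mathcal{T}'(\nu)$ with the status of a real-valued linear functional on $\mathsf{Eff}_{\mathds{R}}(\mathrm{A})$.

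The main step is then to verify that $\mathcal{T}'$ actually lands in $\mathsf{St}(\mathrm{A})$, i.e.\ that $\mathcal{T}'(\nu)$ is a probability weight on $\mathsf{X}(\mathrm{A})$. For any basic effect $E \in \mathsf{X}(\mathrm{A})$, the image $\mathcal{T}(E)$ lies in $\mathsf{Eff}(\mathrm{B})$ by hypothesis, so $\nu(\mathcal{T}(E)) \in [0,1]$, and hence $\mathcal{T}'(\nu)(E) \in [0,1]$. For the normalization condition on a basic measurement $\boldsymbol{m} = \{E_1,\dots,E_s\} \in \mathsf{M}_\mathsf{X}(\mathrm{A})$, I would use the fact (proved earlier in the section) that $\sum_{j} E_j = u_\mathrm{A}$. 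Applying $\mathcal{T}$ and using $\mathcal{T}(u_\mathrm{A}) = u_\mathrm{B}$ gives $\sum_j \mathcal{T}(E_j) = u_\mathrm{B}$, and therefore
\[
\sum_{j=1}^s \mathcal{T}'(\nu)(E_j) \;=\; \sum_{j=1}^s \nu\left(\mathcal{T}(E_j)\right) \;=\; \nu\left(u_\mathrm{B}\right) \;=\; 1,
\]
where in the last step I use that $\nu$, being a state of $\mathrm{B}$, assigns probability $1$ to any measurement and in particular to $u_\mathrm{B}$. This is the step where the hypothesis $\mathcal{T}(u_\mathrm{A}) = u_\mathrm{B}$ is crucial; without it, one would obtain no guarantee that the transported functional is normalized.

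Finally, to establish that $\mathcal{T}'$ is a bijection, I would construct the inverse directly from $\mathcal{T}^{-1}$: define $\mathcal{S}:\mathsf{St}(\mathrm{A})\to \mathsf{St}(\mathrm{B})$ by $\mathcal{S}(\mu)(F) := \mu\bigl(\mathcal{T}^{-1}(F)\bigr)$ for $F \in \mathsf{Eff}(\mathrm{B})$. Since $\mathcal{T}^{-1}$ is itself a linear bijection of effect sets taking $u_\mathrm{B}$ to $u_\mathrm{A}$, the same argument as above shows that $\mathcal{S}(\mu) \in \mathsf{St}(\mathrm{B})$. A direct check gives $\mathcal{T}' \circ \mathcal{S} = \mathrm{id}_{\mathsf{St}(\mathrm{A})}$ and $\mathcal{S} \circ \mathcal{T}' = \mathrm{id}_{\mathsf{St}(\mathrm{B})}$, so $\mathcal{T}'$ is the sought linear bijection. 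The only subtlety that could cause trouble is checking that $\mathcal{T}$ really does map $\mathsf{Eff}(\mathrm{A})$ onto $\mathsf{Eff}(\mathrm{B})$ (and not just into it), but this is guaranteed by the word \emph{bijection} in Definition~\ref{def:equivalence}, so no extra work is required.
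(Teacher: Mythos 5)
Your proposal is correct and follows essentially the same route as the paper: both define $\mathcal{T}'$ as the transpose of $\mathcal{T}$ by pre-composition, verify positivity from $\mathcal{T}(E)\in\mathsf{Eff}\left(\mathrm{B}\right)$ and normalization from $\mathcal{T}\left(u_{\mathrm{A}}\right)=u_{\mathrm{B}}$, and obtain surjectivity by transporting states back through $\mathcal{T}^{-1}$. The only cosmetic difference is that you check normalization measurement-by-measurement via $\sum_j E_j = u_{\mathrm{A}}$, while the paper checks $\mu\left(u\right)=1$ directly and notes this bounds $\mu\left(E\right)\leq 1$; the content is identical.
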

Again, such a $\mathcal{T}'$ can be extended by linearity to become an isomorphism of the vector spaces $\mathsf{St}_{\mathds{R}}\left(\mathrm{B}\right)$ and $\mathsf{St}_{\mathds{R}}\left(\mathrm{A}\right)$.
\begin{proof}
Note that if $\mathrm{A}$ and $\mathrm{B}$ are equivalent, we can construct $\mathcal{T}':\mathsf{St}_{\mathds{R}}\left(\mathrm{B}\right)\to \mathsf{St}_{\mathds{R}}\left(\mathrm{A}\right)$ as the dual map of the isomorphism $\mathcal{T}:\mathsf{Eff}_{\mathds{R}}\left(\mathrm{A}\right) \to \mathsf{Eff}_{\mathds{R}}\left(\mathrm{B}\right)$ introduced in Definition~\ref{def:equivalence}. With such a construction, $\mathcal{T}':\mathsf{St}_{\mathds{R}}\left(\mathrm{B}\right) \to \mathsf{St}_{\mathds{R}}\left(\mathrm{A}\right)$ is defined as $\nu \mapsto \mu$ such that $\mu\left(E\right):=\nu\left(\mathcal{T}\left(E\right)\right)$, for every $E\in\mathsf{Eff}\left(\mathrm{A}\right)$. It is known from linear algebra that $\mathcal{T}':\mathsf{St}_{\mathds{R}}\left(\mathrm{B}\right) \to \mathsf{St}_{\mathds{R}}\left(\mathrm{A}\right)$ is also an isomorphism.

Now, to prove the lemma, it is enough we prove that $\mathcal{T}'\left(\mathsf{St}\left(\mathrm{B}\right)\right)= \mathsf{St}\left(\mathrm{A}\right)$. We first show that $\mathcal{T}'\left(\mathsf{St}\left(\mathrm{B}\right)\right)\subseteq \mathsf{St}\left(\mathrm{A}\right)$. Observe that
$\mu\left(E\right)=\nu\left(\mathcal{T}\left(E\right)\right)\geq 0$ because $\mathcal{T}\left(E\right)\in \mathsf{Eff}\left(\mathrm{B}\right)$ and $\nu\in \mathsf{St}\left(\mathrm{B}\right)$. Moreover, \[\mu\left(u\right)=\nu\left(\mathcal{T}\left(u\right)\right)=\nu\left(u\right)=1,
\]
because $\mathcal{T}\left(u\right)=u$. The fact that $\mu\left(u\right)=1$ also ensures that $\mu\left(E\right)\leq 1$, for every effect $E\in\mathsf{Eff}\left(\mathrm{A}\right)$. Then $\mu\in\mathsf{St}\left(\mathrm{A}\right)$. This shows that $\mathcal{T}'\left(\mathsf{St}\left(\mathrm{B}\right)\right)\subseteq \mathsf{St}\left(\mathrm{A}\right)$.

To show the other inclusion, let us consider a state $\mu\in\mathsf{St}\left(\mathrm{A}\right)$, and let us show we can find a $\nu\in \mathsf{St}\left(\mathrm{B}\right)$ such that $\mu\left(E\right)=\nu\left(\mathcal{T}\left(E\right)\right)$ for all $E\in\mathsf{Eff}\left(\mathrm{A}\right)$. To this end, it is enough to take the state $\nu\in \mathsf{St}\left(\mathrm{B}\right)$ such that $\nu\left(F\right):=\mu\left(\mathcal{T}^{-1}\left(F\right)\right)$ for all $F\in\mathsf{Eff}\left(\mathrm{B}\right)$. Now, by hypothesis $\mathcal{T}^{-1}\left(F\right)\in\mathsf{Eff}\left(\mathrm{A}\right)$, so the definition of $\nu$ is well posed. Then, for any $E\in\mathsf{Eff}\left(\mathrm{A}\right)$, we have
\[
\mu\left(E\right)=\nu\left(\mathcal{T}\left(E\right)\right):=\mu\left(\mathcal{T}^{-1}\left(\mathcal{T}\left(E\right)\right)\right)\equiv \mu\left(E\right).
\]
This shows that $\mathcal{T}'\left(\mathsf{St}\left(\mathrm{B}\right)\right)\supseteq \mathsf{St}\left(\mathrm{A}\right)$, from which $\mathcal{T}'\left(\mathsf{St}\left(\mathrm{B}\right)\right)= \mathsf{St}\left(\mathrm{A}\right)$.
\end{proof}

This concludes our review of general probabilistic theories, and next we apply this framework to 
investigate PT-symmetric quantum theory.

\section{A GPT with PT-symmetric effects}\label{sec:GPT-PT}
In this section, we derive the structure of states in the GPT defined by PT-symmetric observables (and hence effects). 
Here we assume that the observables of the system $\mathrm{A}_{\rm PT}$ under consideration are operators on a finite-dimensional complex Hilbert space
$\mathscr{H}\cong \mathds{C}^d$, where $d$ is the dimension of the space.
However, we remove the Hermiticity constraint from the set of observables (and consequently from effects), and replace it with unbroken PT-symmetry, as given in Definition~\ref{def:unbroken}. 
We denote any valid PT-symmetry discussed in Section~\ref{subsec:PT-symmetry}  by $K$,
and consequently use this notation throughout this article.
We show that, under these assumptions, the theory allows
only one state, which is associated with a multiple of the identity matrix.

Before proving that the theory we construct only allows a single state, 
we first show that for unbroken $K$-symmetric observables, the projectors in their spectral decomposition are also $K$-symmetric. 
We later use this structure to declare $K$-symmetric projectors as the
basic effects in our theory.
We begin our analysis by defining $K$-symmetric projectors.
\begin{defn}
A projector $P$ (i.e.\ an operator satisfying $P^2=P$) 
is said to be $K$\emph{-symmetric} if it commutes with the anitunitary symmetry $K$, i.e. $KP = PK$.
\end{defn}
Note that we do \emph{not} require projectors to be Hermitian (viz.\ orthogonal), but simply idempotent ($P^2=P$).

Now we are ready to state the proposition that shows that 
any unbroken $K$-symmetric operator can be expressed as a linear combination of 
$K$-symmetric, possibly non-orthogonal, projectors onto its eigenspaces.
This proposition complements  
an observation by Bender and Boettcher in Ref.~\cite{BB98} that led to the 
development of unbroken PT-symmetric quantum theory, namely that the eigenvectors of an unbroken PT-symmetric Hamiltonian
are also eigenvectors of the PT operator.

\begin{prop}
\label{prop:PTspectral}
Let $O$ be an unbroken $K$-symmetric operator on $\mathds{C}^d$ with distinct 
eigenvalues $\left\{\lambda_j\right\}_{j=1}^s$. Then 
there exist spectral projectors $\left\{P_j\right\}$ satisfying
\begin{enumerate}
    \item $O = \sum_{j} \lambda_j P_{j}$;
   \item $P_{j} P_{k} = \delta_{jk} P_j$;
    \item $\sum_{j} P_j = \mathds{1}$;
    \item each $P_{j}$ is $K$-symmetric.
\end{enumerate}
\end{prop}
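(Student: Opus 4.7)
The plan is to take $P_j$ to be the ordinary spectral projectors of the (diagonalizable) operator $O$, and then to derive $K$-symmetry from the commutation $[O,K]=0$ together with reality of the spectrum. Since $O$ is diagonalizable by Definition~\ref{def:unbroken}, we have the direct sum decomposition $\mathds{C}^d=\bigoplus_{j=1}^s V_j$ into eigenspaces $V_j:=\ker\left(O-\lambda_j\mathds{1}\right)$. This forces $P_j$ to be the unique idempotent acting as the identity on $V_j$ and annihilating $V_k$ for $k\neq j$, which immediately gives properties 1--3 by standard linear algebra, without any use of $K$.

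All the work is in property 4. First I would verify that every eigenvalue is real: pick a $K$-invariant eigenvector $\ket{v}\in V_j$ with $K\ket{v}=\ket{v}$ (such a vector exists in each eigenspace by the definition of unbroken $K$-symmetry), and combine the antilinearity of $K$ with $KO=OK$ to obtain $\lambda_j\ket{v}=O\ket{v}=OK\ket{v}=KO\ket{v}=\overline{\lambda_j}\ket{v}$. Next I would show that each eigenspace is $K$-invariant: for arbitrary $\ket{w}\in V_j$, the computation $O(K\ket{w})=K(O\ket{w})=K(\lambda_j\ket{w})=\overline{\lambda_j}K\ket{w}=\lambda_j K\ket{w}$ places $K\ket{w}$ in $V_j$, so $KV_j\subseteq V_j$.

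To conclude, I would decompose any $\ket{w}\in\mathds{C}^d$ as $\ket{w}=\sum_k P_k\ket{w}$ with $P_k\ket{w}\in V_k$, and compare $KP_j\ket{w}$ with $P_jK\ket{w}=\sum_k P_j(KP_k\ket{w})$. Since $KP_k\ket{w}\in V_k$ by the previous step, $P_j$ annihilates each summand except the one with $k=j$, leaving $P_jK\ket{w}=KP_j\ket{w}$, which is property 4. The main subtlety throughout is the antilinearity of $K$: one must be careful when pulling scalars through $K$, and one should interpret the phrase ``all the eigenvectors of $O$ are invariant under the action of PT'' in Definition~\ref{def:unbroken} as the existence of a $K$-invariant basis of each eigenspace, since antilinearity prevents every vector in a multidimensional eigenspace from being individually $K$-invariant.
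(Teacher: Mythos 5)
Your proof is correct, and it takes a somewhat different and in fact more explicit route than the paper's. The paper proves property (iv) by writing down $P_j$ concretely: for $\lambda_j\neq 0$ it sets $P_j\left|\psi\right\rangle=\frac{1}{\lambda_j}O\left|\psi\right\rangle$ on $\mathscr{V}_j$ and $0$ elsewhere, argues that this inherits $K$-symmetry from $O$ (using reality of $\lambda_j$ to pull the scalar through the antilinear $K$), and then handles a possible zero eigenvalue separately by writing its projector as $\mathds{1}-\sum_{P_j\neq P_0}P_j$, a real linear combination of $K$-symmetric operators. You instead first establish the key structural fact $K\mathscr{V}_j\subseteq\mathscr{V}_j$ from $[O,K]=0$ and the reality of the spectrum, and then derive $KP_j=P_jK$ directly by decomposing an arbitrary vector along the eigenspaces. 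Your route has two advantages: it treats all eigenvalues uniformly (no special case for $\lambda_j=0$), and it makes explicit the invariance $K\mathscr{V}_j\subseteq\mathscr{V}_j$, which the paper's argument actually needs implicitly as well --- to conclude that $P_j K\left|\psi\right\rangle=\frac{1}{\lambda_j}OK\left|\psi\right\rangle$ for $\left|\psi\right\rangle\in\mathscr{V}_j$, one must already know that $K\left|\psi\right\rangle$ stays in $\mathscr{V}_j$. The price is that you must first prove reality of the eigenvalues from a $K$-invariant eigenvector, whereas the paper takes reality as already established from the definition of unbroken symmetry. Your closing remark on reading ``all eigenvectors are invariant'' as the existence of a $K$-invariant basis of each eigenspace is also the right way to make the definition consistent with antilinearity, and it is all your argument requires.
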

\begin{proof}
The observable $O$ is diagonalizable by definition of unbroken $K$-symmetry, and its spectrum is real. 
The existence of spectral projectors $\left\{P_j\right\}$ satisfying conditions~(i)--(iii) for any diagonalizable matrix $O$ is a known fact of linear algebra. We only need to prove that these projectors satisfy condition (iv). Consider a decomposition of $\mathds{C}^d$ into a direct sum of the eigenspaces of $O$:
\begin{equation*}
    \mathds{C}^d=\mathscr{V}_1 \oplus \dots \oplus \mathscr{V}_s.
\end{equation*}
Let us start by examining the projectors whose associated eigenvalue $\lambda_{j}$
is non-zero. In this case, the projector $P_{j}$ is thus defined
\[
P_{j}\left|\psi\right\rangle =\begin{cases}
\frac{1}{\lambda_{j}}O\left|\psi\right\rangle  & \left|\psi\right\rangle \in\mathscr{V}_{j}\\
0 & \left|\psi\right\rangle \notin\mathscr{V}_{j}
\end{cases}.
\]
Therefore, $P_{j}$ is $K$-symmetric on $\mathscr{V}_{j}$ because
so is $O$. It is also $K$-symmetric outside $\mathscr{V}_{j}$ because
it behaves as the zero operator, which is trivially $K$-symmetric.
Therefore, $P_{j}$ is $K$-symmetric on all $\mathds{C}^{d}$. If
present, let us consider the projector $P_{0}$ associated with the
zero eigenvalue as the last projector. It can be written as $\mathds{1}-\sum_{P_{j}\neq P_{0}}P_{j}$.
Being a linear combination with real coefficients of $K$-symmetric operators, it is $K$-symmetric
itself.
\end{proof}
\noindent This observation provides a strong motivation to construct a GPT with effects
represented by $K$-symmetric projectors
when the Hermiticity requirement for observables is replaced by unbroken $K$-symmetry.
We now proceed to construct such a theory. 

Consider a finite dimensional 
system ${\rm A}_{K}$ where the observables are not necessarily Hermitian,
but they possess unbroken $K$-symmetry.
Thanks to Proposition~\ref{prop:PTspectral}, we can take the set of basic effects of this system to be
\begin{equation*}
    \mathsf{X}\left({\rm A}_{K} \right) = \left\{P: P^2=P, PK = KP \right\}.
\end{equation*}
Basic measurements on this system are collections of $K$-symmetric projectors 
that sum to the identity operator $\mathds{1}$. 
Note that the identity operator is also $K$-symmetric by definition, implying that 
$\left\{\mathds{1}\right\}$ is a particular example of a basic measurement ($\mathds{1}$ is the unit effect). 
Consequently, any valid state  $\rho$ in this new theory must satisfy
$\rho\left(\mathds{1}\right)=1$.

We now show that in this new theory,
 system ${\rm A}_{K}$ has only a single state. 
To prove this, we start by recalling a result of linear algebra.

\begin{lem}\label{lem:linear algebra}
For every linear functional $\nu:\mathsf{S} \to \mathds{R}$ where $\mathsf{S}$ is the complex linear span of a subset of $ M_d\left(\mathds{C}\right)$---the space of complex square matrices of order $d$---there exists a $T_\nu \in M_d\left(\mathds{C}\right)$ satisfying
\begin{equation*}
    {\rm tr}\, T_\nu E = \nu\left(E\right) \quad \forall E \in \mathsf{S}.
\end{equation*}
Furthermore, such a $T_\nu$ is unique if and only if $\mathsf{S}=M_d\left(\mathds{C}\right)$.
\end{lem}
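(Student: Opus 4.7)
The plan is to exploit the non-degenerate bilinear form $\langle A,B\rangle := \mathrm{tr}(AB)$ on $M_d(\mathds{C})$, which puts $M_d(\mathds{C})$ in canonical duality with itself. The statement is then an instance of the elementary finite-dimensional dictionary ``linear functional on a subspace $\leftrightarrow$ element of the ambient space, modulo the annihilator of the subspace''.

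To establish existence, I first extend $\nu$ to a linear functional $\tilde{\nu}$ on all of $M_d(\mathds{C})$. In finite dimensions this requires no Hahn--Banach machinery: pick any complement $\mathsf{S}'$ of $\mathsf{S}$ inside $M_d(\mathds{C})$ and let $\tilde{\nu}$ vanish on $\mathsf{S}'$ (the choice of complement is immaterial for what follows). Next, I use the non-degeneracy of the trace pairing --- evaluating $\mathrm{tr}(T E_{ij})$ against the matrix units $E_{ij}$ recovers every entry of $T$, so $\mathrm{tr}(T\,\cdot\,)\equiv 0$ forces $T=0$ --- to conclude that $T \mapsto \mathrm{tr}(T\,\cdot\,)$ is a linear bijection from $M_d(\mathds{C})$ onto its algebraic dual. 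Applying its inverse to $\tilde{\nu}$ produces a matrix $T_\nu \in M_d(\mathds{C})$ with $\mathrm{tr}(T_\nu E) = \tilde{\nu}(E) = \nu(E)$ for every $E\in\mathsf{S}$.

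For the uniqueness claim, the set of matrices representing the same functional on $\mathsf{S}$ is a coset of the annihilator
\[
\mathsf{S}^{\perp} := \left\{ T\in M_d(\mathds{C}) : \mathrm{tr}(TE)=0 \text{ for all } E\in\mathsf{S} \right\},
\]
so $T_\nu$ is unique precisely when $\mathsf{S}^{\perp} = \{0\}$. Non-degeneracy of the trace pairing yields the dimension identity $\dim_{\mathds{C}} \mathsf{S} + \dim_{\mathds{C}} \mathsf{S}^{\perp} = d^2$, from which $\mathsf{S}^{\perp}=\{0\}$ is equivalent to $\mathsf{S}=M_d(\mathds{C})$. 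This settles both directions of the ``if and only if'' simultaneously.

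I do not expect a substantive obstacle; the only delicate point is a book-keeping one about the codomain of $\nu$. The statement writes $\nu:\mathsf{S}\to\mathds{R}$, whereas the trace representation naturally lives in $\mathds{C}$. Since $\mathds{R}\subset\mathds{C}$, one may simply view $\nu$ as a $\mathds{C}$-valued linear functional on $\mathsf{S}$ and apply the argument above verbatim; the resulting $T_\nu$ automatically satisfies $\mathrm{tr}(T_\nu E)\in\mathds{R}$ for every $E\in\mathsf{S}$, matching the stated output. No further ingredients are needed.
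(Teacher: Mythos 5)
Your proposal is correct and follows essentially the same route as the paper's proof: both rest on the non-degenerate trace pairing that identifies $M_d\left(\mathds{C}\right)$ with its dual, with existence obtained by extending $\nu$ off $\mathsf{S}$ and uniqueness reduced to whether distinct extensions (equivalently, a nontrivial annihilator of $\mathsf{S}$) exist. Your annihilator/dimension-count phrasing of the ``only if'' direction and your explicit remark on the real codomain are slightly more detailed than the paper's version, but the substance is identical.
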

\begin{proof}
We know that, if we have an inner product on $M_{d}\left(\mathds{C}\right)$,
all linear functionals on $M_{d}\left(\mathds{C}\right)$ (and therefore
on $\mathsf{S}$) can be obtained through that inner product. Now,
we can consider the Hilbert-Schmidt inner product on $M_{d}\left(\mathds{C}\right)$,
given by $\left(E,F\right):=\mathrm{tr}\:E^{\dagger}F$, for $E,F\in M_{d}\left(\mathds{C}\right)$.
Therefore, the action of a linear functional $\nu:\mathsf{S}\to\mathds{R}$
can be represented as
\[
\nu\left(E\right)=\mathrm{tr}\,\widetilde{T}_{\nu}^{\dagger}E,
\]
for some complex square matrix $\widetilde{T}_{\nu}$, and for any $E\in \mathsf{S}$.
To get the thesis, it is enough to define $T_{\nu}:=\widetilde{T}_{\nu}^{\dagger}$. 

If $\mathsf{S}=M_d\left(\mathds{C}\right)$, then there is a unique $T_\nu$ by virtue of the isomorphism between $M_d\left(\mathds{C}\right)$ and its dual space via the established Hilbert-Schmidt inner product. To prove the converse direction, suppose by contradiction that $\mathsf{S}$ is a proper subspace of $M_{d}\left(\mathds{C}\right)$. In this case, there is not a unique way to extend a linear functional
on $\mathsf{S}$ to the whole
$M_{d}\left(\mathds{C}\right)$. This means that we can associate more than one
square matrix of order $d$ with~$\nu$.
\end{proof}
\noindent Lemma~\ref{lem:linear algebra} implies that the states in $\mathsf{St}\left({\rm A}_{K}\right)$ can be represented by 
matrices in~$M_d\left(\mathds{C}\right)$. 

Now we focus on the special case where the PT-symmetry is simply $\kappa$, the complex conjugation operation in the canonical basis,
and show that the associated $\kappa$-symmetric system ${\rm A}_{\kappa}$ admits only a single state. 
After that, we extend this result to any general PT-symmetry $K$.
In the special case $K=\kappa$, it is easy to verify that the set of all $\kappa$-symmetric effects 
are given by real projectors on $\mathds{C}^d$: 
for any $\kappa$-symmetric projector $P \in M_d\left(\mathds{C}\right)$, we have $\kappa P \kappa = P^*$ by definition of
complex conjugation, and $\kappa P \kappa = P$ by the definition
of $\kappa$ symmetry.
The following lemma adapts Lemma~\ref{lem:linear algebra} to deal with the case of real projectors and
show that any valid state in ${\rm A}_{\kappa}$ can be represented by a real matrix.
\begin{lem}
\label{lem:dual}
Let $\mathsf{Q}$
denote the $\mathds{R}$-linear span of $\kappa$-symmetric, i.e., real projectors on $\mathds{C}^d$.
For every linear functional $\nu:
\mathsf{Q} \to \mathds{R}$, there exists a 
$\kappa$-symmetric operator $T_\nu\in M_d\left(\mathds{R}\right)$ satisfying
\[
\nu\left(Q\right) = {\rm tr}\,T_\nu Q \qquad \forall Q \in \mathsf{Q}.
\]
\end{lem}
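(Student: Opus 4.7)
The plan is to observe that $\mathsf{Q}$ actually lies inside $M_d(\mathds{R})$, and then obtain $T_\nu$ from the real trace pairing on this space. First, I would note that every $\kappa$-symmetric projector $P$ is a real matrix: combining $\kappa P \kappa = P$ with the identity $\kappa M \kappa = M^*$ for any $M$ gives $P^* = P$. Hence $\mathsf{Q}$, as an $\mathds{R}$-linear span of real projectors, is an $\mathds{R}$-subspace of $M_d(\mathds{R})$, and $\nu:\mathsf{Q}\to \mathds{R}$ is an $\mathds{R}$-linear functional on a subspace of the finite-dimensional real vector space $M_d(\mathds{R})$.

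Next, I would extend $\nu$ to an $\mathds{R}$-linear functional $\widetilde{\nu}:M_d(\mathds{R})\to \mathds{R}$ on the ambient space (any linear extension will do), and apply the real analogue of the Hilbert-Schmidt argument of Lemma~\ref{lem:linear algebra}. The bilinear pairing $(A,B)\mapsto \mathrm{tr}\, AB$ on $M_d(\mathds{R})$ is non-degenerate, since choosing $B=A^{T}$ yields $\mathrm{tr}\, A A^{T} = \sum_{i,j} A_{ij}^{2}$, which vanishes only when $A=0$. Riesz representation then produces a real matrix $T_\nu \in M_d(\mathds{R})$ with $\widetilde{\nu}(A) = \mathrm{tr}\, T_\nu A$ for every $A\in M_d(\mathds{R})$; restricting to $\mathsf{Q}$ gives $\nu(Q)=\mathrm{tr}\, T_\nu Q$ for every $Q\in \mathsf{Q}$.

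Finally, since $T_\nu$ is real, $\kappa T_\nu \kappa = T_\nu^{*} = T_\nu$, so $T_\nu$ is $\kappa$-symmetric, as required. There is no substantial obstacle beyond the initial observation that $\mathsf{Q}\subseteq M_d(\mathds{R})$: the rest is the same Riesz-type argument as in Lemma~\ref{lem:linear algebra}, now carried out in the real category so that the representing matrix is automatically real and hence automatically $\kappa$-symmetric. An essentially equivalent route would be to invoke Lemma~\ref{lem:linear algebra} to obtain some complex $\widetilde{T}_\nu$ and then symmetrize via $T_\nu := \mathrm{Re}(\widetilde{T}_\nu) = \tfrac{1}{2}(\widetilde{T}_\nu + \overline{\widetilde{T}_\nu})$, using the reality of $Q$ and of $\nu(Q)$ to verify that the real part still represents $\nu$; the non-uniqueness clause of Lemma~\ref{lem:linear algebra} is exactly what makes this move legitimate.
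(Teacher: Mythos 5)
Your proposal is correct and follows essentially the same route as the paper's proof: observe that $\kappa$-symmetric projectors are real so that $\mathsf{Q}\subseteq M_d\left(\mathds{R}\right)$, then run the Hilbert--Schmidt/Riesz representation argument of Lemma~\ref{lem:linear algebra} in the real category to obtain a real, hence $\kappa$-symmetric, $T_\nu$. You simply spell out the details (non-degeneracy of the real trace pairing, the extension step, and the alternative symmetrization via the real part) that the paper leaves implicit.
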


\begin{proof}
Since here we are dealing only with real projectors,
this case can be embedded in $M_{d}\left(\mathds{R}\right)$, for which an analogous statement to Lemma~\ref{lem:linear algebra} holds. Then the matrix $T_{\nu}$ of Lemma~\ref{lem:linear algebra} can be taken to be real and therefore, $\kappa$-symmetric.
\end{proof}
Now we determine the allowed states for system ${\rm A}_{\kappa}$
with $\kappa$-symmetric effects,
which we refer to as $\kappa$-symmetric states. 
According to Section~\ref{subsec:GPTs}, we need to find linear functionals on $\mathsf{Q}$ 
that yield a number in $\left[0,1\right]$ when applied to a basic effect that is $\kappa$-symmetric, 
i.e.\ a real projector.

\begin{lem}
\label{lem:realprojectors}
For every system of a $\kappa$-symmetric theory, there exists only one  state \(\nu\), given by 
\begin{equation*}
\nu\left(Q\right)= \frac{1}{d}\,{\rm rk}\,Q,
\end{equation*}
for all  real projectors $Q$, where $d$ is the dimension of the system, and ${\rm rk}$ denotes the rank of the matrix.
\end{lem}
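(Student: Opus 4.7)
The plan is to combine Lemma~\ref{lem:dual} with the abundant supply of \emph{non-orthogonal} rank-one real projectors to pin down the representing matrix of $\nu$ uniquely.

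First I would invoke Lemma~\ref{lem:dual} to write $\nu\left(Q\right) = \mathrm{tr}\left(T_\nu Q\right)$ for some $T_\nu \in M_d\left(\mathds{R}\right)$. The defining conditions of a state then impose $\nu\left(\mathds{1}\right) = \mathrm{tr}\left(T_\nu\right) = 1$, since $\left\{\mathds{1}\right\}$ is itself a basic measurement, together with $\mathrm{tr}\left(T_\nu Q\right) \in \left[0,1\right]$ for every real projector $Q$.

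The crux is to probe $T_\nu$ using non-orthogonal rank-one real projectors. For any nonzero $v \in \mathds{R}^d$ and any $w \in \mathds{R}^d$ with $w^T v = 1$, the matrix $P_{v,w} := v w^T$ is idempotent of rank one, hence lies in $\mathsf{Q}$, and $\nu\left(P_{v,w}\right) = \mathrm{tr}\left(T_\nu v w^T\right) = w^T T_\nu v$. Fix $v$ and any $w_0$ with $w_0^T v = 1$. If $T_\nu v$ were not proportional to $v$, there would exist $u \in \mathds{R}^d$ with $u^T v = 0$ and $u^T T_\nu v \neq 0$; then $w := w_0 + t u$ still satisfies $w^T v = 1$, yet $\nu\left(P_{v,w}\right) = w_0^T T_\nu v + t\, u^T T_\nu v$ is unbounded as $t \in \mathds{R}$ varies, contradicting $\nu\left(P_{v,w}\right)\in\left[0,1\right]$. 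Hence $T_\nu v$ is a scalar multiple of $v$ for every $v \in \mathds{R}^d$, and applying this to $v_1, v_2$ and $v_1 + v_2$ for any linearly independent pair forces the scalar to be the same in all cases, so $T_\nu = \lambda \mathds{1}$ for a single $\lambda \in \mathds{R}$.

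Finally, the normalization $\mathrm{tr}\left(T_\nu\right) = 1$ gives $\lambda = 1/d$, so $\nu\left(Q\right) = \mathrm{tr}\left(Q\right)/d$. Because any (possibly non-orthogonal) projector $Q$ has spectrum in $\left\{0,1\right\}$ with the eigenvalue $1$ of multiplicity $\mathrm{rk}\,Q$, one has $\mathrm{tr}\,Q = \mathrm{rk}\,Q$, yielding the claimed $\nu\left(Q\right) = \mathrm{rk}\,Q/d$. I expect the middle step---the boundedness argument forcing $T_\nu v$ to be parallel to $v$---to be the main obstacle, since this is precisely where the generality of \emph{non-orthogonal} real projectors, as opposed to merely orthogonal ones, does the essential work; the orthogonal projectors alone would leave far more freedom in $T_\nu$.
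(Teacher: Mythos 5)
Your proof is correct, and it rests on the same essential insight as the paper's: the set of real (non-orthogonal) idempotents is so large that it forces the representing matrix $T_\nu$ from Lemma~\ref{lem:dual} to be a multiple of the identity. The execution differs, though. The paper works in the canonical basis and produces two tailored counterexample projectors: first $\left|j\right\rangle\left\langle j\right| - 2\left|j\right\rangle\left\langle k\right|/c_1$ to kill the off-diagonal entries of $T_\nu$, then $\left|+\right\rangle\left\langle +\right| - 3\left|+\right\rangle\left\langle -\right|/c_2$ to equalize the diagonal entries. You instead use the basis-free one-parameter family $P_{v,w}=vw^{\rm T}$ with $w^{\rm T}v=1$ and an unboundedness argument to conclude that every nonzero $v$ is an eigenvector of $T_\nu$, whence $T_\nu=\lambda\mathds{1}$ by the standard linear-algebra fact about operators all of whose vectors are eigenvectors. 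Your route is arguably cleaner and makes transparent exactly where non-orthogonality is doing the work; the paper's route is more concrete and computational. One small point, shared with the paper's own proof: strictly speaking both arguments establish uniqueness, and existence should be noted separately, but it is immediate since $Q\mapsto\mathrm{tr}\,Q/d$ is nonnegative on idempotents and sums to $1$ over any collection of projectors summing to $\mathds{1}$.
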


\begin{proof}
Let $\nu$ be a $\kappa$-symmetric state, and let $T_\nu$ be a
real operator satisfying
\[
\nu\left(Q\right) = {\rm tr}\,T_\nu Q 
\]
for every real projector $Q$ (cf.\ Lemma~\ref{lem:dual}). 
Let $\left\{\left|j\right\rangle\right\}$ be the orthonormal basis of $\mathds{C}^d$ in which $\kappa$ acts as complex conjugation.
For any $j \in \left\{1,\dots,d\right\}$, we must have 
$0 \le {\rm tr}\, T_\nu \left| j \right\rangle\left\langle j \right| \le 1$, since $\left| j \right\rangle\left\langle j \right|$ is a basic effect (a $\kappa$-symmetric projector),
so that
\[
0 \le \left(T_\nu\right)_{jj} \le 1.
\]
Let us assume for some $j,k \in \left\{1,\dots,d\right\}$ and $j\ne k$, 
we have $c_1 := \left\langle k\middle| T_\nu \middle| j\right\rangle \ne 0$. Let
$Q: = \left|j\right\rangle\left\langle j\right| - 2\left|j\right\rangle\left\langle k\right|/c_1$. 
Observe that $Q$ has real entries and
$Q^2 = Q$; therefore, $Q$ is a $\kappa$-symmetric projector. 
However, ${\rm tr}\,T_\nu Q = \left(T_\nu\right)_{jj} - 2 < 0$, which
leads to a contradiction as $\nu\left(Q\right) = {\rm tr}\,T_\nu Q\in[0,1]$,
$\nu$ being a $\kappa$-symmetric state. 
Therefore, we must have 
$\left\langle k\middle| T_\nu \middle| j\right\rangle = 0$ for all pairs $j,k$ with
$j \ne k$. We have concluded that $T_\nu$ is a diagonal matrix. 

Let us next assume that for some $j \ne k$,
$c_2 := \left\langle j\middle| T_\nu \middle| j\right\rangle - \left\langle k\middle| T_\nu \middle| k\right\rangle
\ne 0$. 
Define $\left|\pm\right\rangle := \left(\left|j\right\rangle \pm \left|k\right\rangle\right)/\sqrt{2}$,
and $Q' := \left|+\right\rangle\left\langle +\right| - 3\left|+\right\rangle\left\langle -\right|/c_2$.
Once again, it is easy to verify that $Q'$ is a $\kappa$-symmetric projector.
We have $\left\langle + \middle| T_\nu \middle|+\right\rangle = {\rm tr}\,T_\nu\ket{+}\bra{+}\le 1$ and 
$\left\langle - \middle| T_\nu \middle|+\right\rangle = c_2/2$,
where we have used the fact that $T_\nu$ is diagonal to derive the latter equation.
Therefore,
\[
{\rm tr}\,T_\nu Q' = \left\langle + \middle| T_\nu \middle|+\right\rangle - 
\frac{3}{c_2}\left\langle - \middle| T_\nu \middle|+\right\rangle \le 
1 - \frac{3}{2} <0. 
\]
We reach a contradiction again, and therefore $c_2 = 0$.
We have therefore proved that all diagonal entries of $T_\nu$ must be
identical. Now, from $\nu\left(\mathds{1}\right) = 1$, we get $T_\nu = \mathds{1}/d$,
which leads to $\nu\left(Q\right) = {\rm tr}\,Q/d = {\rm rk}\,Q/d$ as required.
\end{proof}
We now extend this lemma to general PT-symmetries (denoted by the operator $K$) beyond $\kappa$-symmetry.
This leads us to our main result, 
namely that if we replace Hermiticity with $K$-symmetry, 
the system ${\rm A}_{K}$ has only a single valid state.
\begin{thm}
\label{thm:PT}
For every system ${\rm A}_{K}$ of a ${K}$-symmetric theory, 
there exists only one state \(\mu\), given by 
\begin{equation*}
\mu\left(P\right)= \frac{1}{d}\,{\rm rk}\,P
\end{equation*}
for every $K$-symmetric projector $P$, where $d$ is the dimension of the system.
\end{thm}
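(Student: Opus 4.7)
The approach is to unitarily conjugate the general antiunitary involution $K$ into the canonical complex conjugation $\kappa$, thereby reducing Theorem~\ref{thm:PT} to the already established Lemma~\ref{lem:realprojectors}. The key technical input is that every antiunitary involution on $\mathds{C}^d$ admits an orthonormal basis of fixed vectors, which yields a canonical form $K = U\kappa U^{\dagger}$ for some unitary $U$.

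First I would establish this canonical form. For any unit vector $\ket{\psi}$, either $\ket{\psi}+K\ket{\psi}\neq 0$ (whose normalization is $K$-fixed), or else $\ket{\psi}-K\ket{\psi}\neq 0$, in which case $\mathrm{i}(\ket{\psi}-K\ket{\psi})$ is $K$-fixed, using antilinearity together with $K^{2}=\mathds{1}$. The orthogonal complement of any fixed vector $\ket{e}$ is $K$-invariant: if $\braket{e|w}=0$, then by antiunitarity $\braket{e|Kw}=\overline{\braket{Ke|K^{2}w}}=\overline{\braket{e|w}}=0$. Induction on $d$ produces an orthonormal basis $\{\ket{e_{j}}\}$ of $K$-fixed vectors, and the unitary $U$ defined by $U\ket{j}=\ket{e_{j}}$, where $\{\ket{j}\}$ is the canonical basis in which $\kappa$ acts as complex conjugation, satisfies $U\kappa U^{\dagger}=K$, since both sides are antilinear and agree on $\{\ket{e_{j}}\}$.

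Next, the map $P \mapsto U^{\dagger}PU$ is a rank-preserving bijection from $K$-symmetric projectors onto $\kappa$-symmetric projectors that preserves sums and sends $\mathds{1}$ to $\mathds{1}$, because $KP-PK = U(\kappa U^{\dagger}PU - U^{\dagger}PU\kappa)U^{\dagger}$, so the commutator vanishes on one side if and only if it vanishes on the other. For any state $\mu\in\mathsf{St}(\mathrm{A}_{K})$, I would define $\nu$ on the set of $\kappa$-symmetric projectors by $\nu(Q):=\mu(UQU^{\dagger})$. Since the above bijection maps basic $\kappa$-measurements to basic $K$-measurements (sums are preserved), $\nu$ is a state on $\mathrm{A}_{\kappa}$, and Lemma~\ref{lem:realprojectors} gives $\nu(Q)=\mathrm{rk}\,Q/d$. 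Hence $\mu(P)=\nu(U^{\dagger}PU)=\mathrm{rk}\,P/d$ for every $K$-symmetric projector $P$, as required. The main obstacle is the first step, verifying that antiunitary involutions are unitarily equivalent to $\kappa$, which is the only place where careful handling of antilinearity is needed; everything downstream is purely formal.
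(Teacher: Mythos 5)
Your proof is correct, and it reaches the same reduction-to-$\kappa$ strategy as the paper, but via a genuinely different technical route. The paper defines $M:=K\kappa$ and invokes a cited theorem (Hong, 1988) to write $M=S\left(S^{*}\right)^{-1}$, hence $K=S\kappa S^{-1}$ for some invertible but not necessarily unitary $S$; the conjugation $P\mapsto S^{-1}PS$ then carries $K$-symmetric projectors to real projectors, and the rest follows from Lemma~\ref{lem:realprojectors} via the dual map of Lemma~\ref{lem:duality}, using that similarity preserves rank. You instead prove from scratch that an antiunitary involution admits an orthonormal basis of fixed vectors, giving the stronger normal form $K=U\kappa U^{\dagger}$ with $U$ unitary; your fixed-vector argument ($\ket{\psi}+K\ket{\psi}$ or $\mathrm{i}\left(\ket{\psi}-K\ket{\psi}\right)$, plus $K$-invariance of the orthogonal complement and induction) is sound, and your direct verification that $\nu\left(Q\right):=\mu\left(UQU^{\dagger}\right)$ is a probability weight is exactly the dual-map construction made explicit. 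What your approach buys is self-containedness (no external matrix-theory citation) and a unitary rather than merely similar equivalence --- which is essentially the content the paper only establishes later, in Proposition~\ref{prop:isomorphism real effects}, via the Autonne--Takagi factorization; what the paper's route buys is brevity at this point by outsourcing the canonical form. Both arguments are complete and correct.
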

\begin{proof}
As a first step, let us prove that a system $\mathrm{A}_K$ is equivalent to a system $\mathrm{A}_\kappa$. To this end, define $M := {K}\kappa$. By Theorem 3.1 in Ref.~\cite{HONG1988143}, 
we can express $M$ as $M = S\left(S^*\right)^{-1}$ for some operator $S \in M_d\left(\mathds{C}\right)$, 
where $^*$ denotes complex conjugation of the matrix entries,
so that ${K} = S\kappa S^{-1}$. 
With every ${K}$-symmetric projector $P$, we can associate a projector 
$Q = S^{-1}PS$. 
Now, 
\begin{align*}
Q^* &= \kappa S^{-1}PS \kappa = S^{-1} KP S \kappa \\
&= S^{-1} PK S\kappa =  S^{-1} P S = Q,
\end{align*}
where we have used ${K} = S\kappa S^{-1}$ repeatedly, and also used the fact that $P$ is $K$-symmetric. 
We conclude that $Q$ has real entries,
and it is $\kappa$-symmetric. So in this case $\mathcal{T}:\mathsf{Eff}\left(\mathrm{A}_K\right)\to \mathsf{Eff}\left(\mathrm{A}_\kappa\right)$ is $\mathcal{T}\left(P\right)=S^{-1}PS$, for any $P\in\mathsf{Eff}\left(\mathrm{A}_K\right)$. Notice that $\mathcal{T}$ is a linear bijection (its inverse is $\mathcal{T}^{-1}\left(Q\right)=SQS^{-1}$, for $Q\in\mathsf{Eff}\left(\mathrm{A}_{\kappa}\right)$) and $\mathcal{T}\left(\mathds{1}\right)=S^{-1}\mathds{1}S=\mathds{1}$. Therefore, $\mathrm{A}_K$ is equivalent to $\mathrm{A}_\kappa$.

Then we know that there is a (linear) bijection between the sets of states of $\mathrm{A}_\kappa$ and $\mathrm{A}_K$. Hence, $\mathrm{A}_K$ will have one state $\mu$ too. To determine it, we make use of the dual map of $\mathcal{T}$, as per Lemma~\ref{lem:duality}. We then have, for every $P\in\mathsf{Eff}\left(\mathrm{A}_{K}\right)$,
\[
\mu\left(P\right) = \nu\left(S^{-1} P S\right) = {\rm rk}\left(S^{-1} P S\right)/d = {\rm rk}\,P/d,
\]
where $\nu$ is the state determined in Lemma~\ref{lem:realprojectors}.

\end{proof}

This theorem shows that a purely PT-symmetric theory is trivial, therefore PT-symmetry alone does not extend quantum theory in any meaningful way. Finally, we can represent the unique state from Theorem~\ref{thm:PT} by a multiple of the identity matrix,
thanks to Lemma~\ref{lem:linear algebra}.
\begin{coro}
The unique state of a $d$-dimensional ${K}$-symmetric system ${\rm A}_{K}$
can be represented by $\frac{1}{d}\mathds{1}$. 
\end{coro}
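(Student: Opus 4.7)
The plan is to invoke Lemma~\ref{lem:linear algebra} in the direction of \emph{existence}, and then simply verify that the candidate matrix $\tfrac{1}{d}\mathds{1}$ reproduces the unique state $\mu$ identified in Theorem~\ref{thm:PT}. Since $\mathds{1}$ is itself a $K$-symmetric projector, it belongs to the span $\mathsf{S}$ of $K$-symmetric projectors on $\mathds{C}^d$, so $\tfrac{1}{d}\mathds{1}$ is at least a legitimate element of $M_d\left(\mathds{C}\right)$ to test.

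More concretely, Lemma~\ref{lem:linear algebra} guarantees that any real-linear functional on $\mathsf{S}$, and in particular the state $\mu$, can be written as $\mu\left(P\right)=\mathrm{tr}\,T_\mu P$ for some $T_\mu\in M_d\left(\mathds{C}\right)$ (possibly non-unique, as $\mathsf{S}$ need not equal $M_d\left(\mathds{C}\right)$). I would then take $T_\mu := \tfrac{1}{d}\mathds{1}$ and compute directly
\[
\mathrm{tr}\left(\tfrac{1}{d}\mathds{1}\cdot P\right)=\tfrac{1}{d}\,\mathrm{tr}\,P,
\]
for any $K$-symmetric projector $P$. The key observation then is that for any idempotent matrix $P$ (whether orthogonal or not), the eigenvalues lie in $\left\{0,1\right\}$, so $\mathrm{tr}\,P=\mathrm{rk}\,P$. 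Hence $\mathrm{tr}\left(\tfrac{1}{d}\mathds{1}\cdot P\right)=\tfrac{1}{d}\,\mathrm{rk}\,P$, which coincides precisely with the formula for $\mu\left(P\right)$ established in Theorem~\ref{thm:PT}.

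There is no real obstacle here; the only subtlety worth flagging is that the representing matrix is in general not unique (since $\mathsf{S}$ may be a proper subspace of $M_d\left(\mathds{C}\right)$, as noted at the end of Lemma~\ref{lem:linear algebra}), so the corollary should be read as providing one natural choice of representative rather than the unique one. This asymmetry with the standard quantum case (where density matrices are uniquely determined) is itself a reflection of the fact that the $K$-symmetric effect space is strictly smaller than the Hermitian one, reinforcing the triviality of the theory.
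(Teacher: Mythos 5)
Your proof is correct and matches the paper's intended argument: the paper offers no explicit proof beyond citing Lemma~\ref{lem:linear algebra}, and your key step $\mathrm{tr}\,P=\mathrm{rk}\,P$ for idempotents is exactly the identification already used at the end of the proof of Lemma~\ref{lem:realprojectors}. One minor quibble: your closing caveat about non-uniqueness of the representative is not actually needed here, since real projectors already span $M_d\left(\mathds{R}\right)$ over $\mathds{R}$ (e.g.\ $\left|j\right\rangle\left\langle j\right|$ and $\left|j\right\rangle\left\langle j\right|+c\left|j\right\rangle\left\langle k\right|$ are projectors, so all elementary matrices lie in the span), hence the complex span of $K$-symmetric projectors is all of $M_d\left(\mathds{C}\right)$ and the representing matrix is in fact unique by Lemma~\ref{lem:linear algebra} — but this aside does not affect the validity of your argument.
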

We conclude this section with a remark that our analysis can be extended to more general 
antilinear involutions $K$ as the unbroken symmetry of observables. In other words, we go beyond the case of PT-symmetry. To see this, note that Theorem~\ref{thm:PT}, which constitutes the core of the results presented in this 
section, holds for \emph{any} antilinear involution.

\section{A GPT with quasi-Hermitian effects}\label{sec:quasi-hermitian}
After the failure of PT-symmetry alone to extend quantum theory, we begin our journey to explore other possible ways, related to PT-symmetry, to extend quantum theory.
Specifically, in this section we show that if the allowed observables on a certain Hilbert space are quasi-Hermitian (also known as $\eta$-Hermitian,
cf.\ Definition~\ref{def:qH})
then such a system is equivalent to a standard quantum system.
We show that the states in this theory 
are also quasi-Hermitian ($\eta$-Hermitian) with respect to the same~$\eta$.
In order to emphasize the $\eta$-dependence of the quasi-Hermiticity constraint,
we refer to the operators satisfying Definitions~\ref{def:qH} and~\ref{def:qU}
as $\eta$-Hermitian and $\eta$-unitary, respectively. 
 Note that the equivalence of 
quasi-Hermitian quantum systems with standard quantum systems was already known \cite{KAS22}.
Nevertheless, here we rederive this result in the broader framework of general probabilistic theories, which subsumes the known result.
It is worth emphasizing that our analysis does not make any a-priori assumption that pure states of 
quasi-Hermitian quantum theory form a Hilbert space.
 
Given that this new theory only admits $\eta$-Hermitian observables,
we characterize the set of effects and states allowed for the system.
In order to do so, we need the following definitions.
\begin{defn}
\label{def:etale}
An $\eta$-Hermitian operator $E$ is \emph{$\eta$-positive semidefinite}, denoted $E \ge_{\eta} \mathbf{0}$, if
\begin{equation*}
    \left\langle\psi\middle| E\middle|\psi\right\rangle_\eta \ge 0 \quad \forall \left|\psi\right\rangle \in \mathds{C}^d.
\end{equation*}
\end{defn}
Note that $\left\langle\psi\middle| E\middle|\psi\right\rangle_\eta = \left\langle\psi\middle| \eta E\middle|\psi\right\rangle$
by the definition of the $\eta$-inner product in Eq.~\eqref{eq:etaIP}, so $E$ is $\eta$-positive semidefinite if and only if $\eta E$ is positive semidefinite.
\begin{defn}
An \emph{$\eta$-density matrix} is a $\eta$-positive semidefinite matrix of unit trace. 
\end{defn}

For any $\eta$-Hermitian observable $O$, 
it is not hard to see, with the help of the modified inner product
in Eq.~\eqref{eq:etaIP}, that $O$ has a spectral decomposition in terms of 
rank-1 projectors that are also $\eta$-Hermitian.
Therefore, basic effects in the new theory can be taken to be  all
rank-1, $\eta$-Hermitian projectors.
More generally, the set of all allowed effects of this system $\mathrm{A}_{\eta}$  
is given by
\begin{equation*}
    \mathsf{Eff}\left(\mathrm{A}_{\eta}\right) = \left\{E: E^\dagger = \eta E \eta^{-1}, \mathbf{0} \le_{\eta} E \le_{\eta} \mathds{1}\right\}.
\end{equation*}
In this setting, measurements are all the collections of $\eta$-Hermitian effects 
that sum to $\mathds{1}$. Basic measurements are those comprised of rank-1 $\eta$-Hermitian projectors.
As $\left\{\mathds{1}\right\}$ is also a measurement, any valid state $\nu$ of system $\mathrm{A}_{\eta}$ must obey the property $\nu\left(\mathds{1}\right)=1$ (again, $\mathds{1}$ is the unit effect).

We now prove that every $\eta$-Hermitian quantum system $\mathrm{A}_{\eta}$
is equivalent to a standard, 
i.e.\ Hermitian quantum system $\mathrm{A}_{\mathds{1}}$ ($\eta=\mathds{1}$).

\begin{lem}\label{lem:bijection quasi-Hermitian}
The systems $\mathrm{A}_{\eta}$ and $\mathrm{A}_{\mathds{1}}$ are equivalent.
\end{lem}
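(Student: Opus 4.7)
The plan is to exhibit an explicit linear bijection $\mathcal{T}:\mathsf{Eff}(\mathrm{A}_\eta)\to \mathsf{Eff}(\mathrm{A}_{\mathds{1}})$ that sends the unit effect to the unit effect, thereby meeting Definition~\ref{def:equivalence}. Since $\eta$ is positive definite it has a unique positive definite square root $\eta^{1/2}$, which is invertible. I would set
\[
\mathcal{T}(E):=\eta^{1/2}E\eta^{-1/2},\qquad \mathcal{T}^{-1}(F):=\eta^{-1/2}F\eta^{1/2}.
\]
Linearity is immediate, bijectivity follows from the explicit inverse, and $\mathcal{T}(\mathds{1})=\mathds{1}$ is trivial. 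So the substance of the argument is to check that $\mathcal{T}$ really carries $\mathsf{Eff}(\mathrm{A}_\eta)$ into $\mathsf{Eff}(\mathrm{A}_{\mathds{1}})$ and back.

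First I would verify the Hermiticity condition. Taking $E$ to satisfy $\eta E\eta^{-1}=E^\dagger$, a short computation gives $\mathcal{T}(E)^\dagger = \eta^{-1/2}E^\dagger\eta^{1/2}=\eta^{-1/2}(\eta E\eta^{-1})\eta^{1/2}=\eta^{1/2}E\eta^{-1/2}=\mathcal{T}(E)$, so $\mathcal{T}(E)$ is Hermitian. The same manipulation applied to $\mathcal{T}^{-1}$ shows that the image of a Hermitian operator is $\eta$-Hermitian, giving a bijection at the level of the real linear spans of the two effect sets.

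Next I would check the positivity side. Because $E$ is $\eta$-Hermitian, $\eta E$ is Hermitian, and by Definition~\ref{def:etale} we have $E\geq_\eta\mathbf{0}$ iff $\eta E\geq \mathbf{0}$ in the usual sense. Writing $\eta E=\eta^{1/2}\mathcal{T}(E)\eta^{1/2}$ realizes $\eta E$ as a congruence of $\mathcal{T}(E)$ by the invertible positive operator $\eta^{1/2}$, so $\eta E\geq\mathbf{0}$ iff $\mathcal{T}(E)\geq\mathbf{0}$. Applying the same equivalence to $\mathds{1}-E$ shows $E\leq_\eta \mathds{1}$ iff $\mathcal{T}(E)\leq\mathds{1}$. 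Hence $\mathcal{T}$ and $\mathcal{T}^{-1}$ send effects to effects.

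I expect no genuine obstacle here, as the lemma is essentially an operator-algebra identity; the one subtle point that merits care is the need to first argue that $\eta E$ is Hermitian in order to apply the usual positivity/congruence facts, and that $\eta$-positive semidefiniteness is really equivalent to ordinary positive semidefiniteness of $\eta E$ (which follows at once from $\langle\psi|E|\psi\rangle_\eta=\langle\psi|\eta E|\psi\rangle$). Once those identifications are made, invoking Definition~\ref{def:equivalence} closes the proof, and Lemma~\ref{lem:duality} then automatically yields the corresponding bijection of state spaces used in the rest of the section.
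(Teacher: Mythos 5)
Your proposal is correct and follows essentially the same route as the paper: the same conjugation map $\mathcal{T}(E)=\eta^{1/2}E\eta^{-1/2}$ with inverse $\eta^{-1/2}F\eta^{1/2}$, the same Hermiticity computation, and the same positivity argument (your congruence $\eta E=\eta^{1/2}\mathcal{T}(E)\eta^{1/2}$ is just a rephrasing of the paper's substitution $\left|\psi\right\rangle=\eta^{-1/2}\left|\phi\right\rangle$). If anything, you are slightly more careful than the paper in explicitly checking that $\mathcal{T}^{-1}$ also carries effects to effects, which is needed for the bijection between the effect sets rather than merely the linear spans.
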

\begin{proof}
Let us consider the map $\mathcal{T}: \mathsf{Eff}\left(\mathrm{A}_{\eta}\right) \to \mathsf{Eff}\left(\mathrm{A}_{\mathds{1}}\right)$, whereby $E \mapsto \eta^{1/2}E\eta^{-1/2}$, for every $E\in\mathsf{Eff}\left(\mathrm{A}_{\eta}\right)$. Let us check that  $\eta^{1/2}E\eta^{-1/2}$ is Hermitian, and that $\mathbf{0}\leq\eta^{1/2}E\eta^{-1/2} \le \mathds{1}$. 
The Hermiticity of $\eta^{1/2}E\eta^{-1/2}$ can be proven by
\begin{align*}
    \left(\eta^{1/2}E\eta^{-1/2}\right)^\dagger &= \eta^{-1/2}E^\dagger\eta^{1/2} \\ &= \eta^{-1/2}\eta E \eta^{-1}\eta^{1/2} \\&= \eta^{1/2}E\eta^{-1/2},
\end{align*}
where we have used the fact that $E$ is $\eta$-Hermitian.
The property $\eta^{1/2}E\eta^{-1/2} \geq \mathbf{0}$ follows from
\begin{align*}
\left\langle\phi\middle|\eta^{1/2}E\eta^{-1/2} \middle|\phi\right\rangle &= 
   \left\langle\psi\middle|\eta^{1/2} \left(\eta^{1/2}E \eta^{-1/2}\right) \eta^{1/2}\middle|\psi\right\rangle \nonumber\\
    &=  \left\langle\psi\middle|\eta E\middle|\psi\right\rangle \geq 0  \quad \forall \ket{\phi} \in \mathds{C}^d,
\end{align*}
where we have used the substitution $\ket{\psi} := \eta^{-1/2}\ket{\phi}$ in the first equality 
and Definition~\ref{def:etale} in the last step.
The property $\eta^{1/2}E\eta^{-1/2} \le \mathds{1}$ is proven in a similar way:
\begin{align*}
&\left\langle\phi\middle|\left(\eta^{1/2}E\eta^{-1/2} - \mathds{1}\right)\middle|\phi\right\rangle \\&= 
    \left\langle\psi\middle|\eta^{1/2} \left[\eta^{1/2}\left(E - \mathds{1}\right)\eta^{-1/2}\right] \eta^{1/2}\middle|\psi\right\rangle \nonumber\\
    &= \left\langle\psi\middle|\eta\left(E - \mathds{1}\right)\middle|\psi\right\rangle \le 0  \quad \forall \ket{\phi} \in \mathds{C}^d.
\end{align*}
$\mathcal{T}$ is a linear bijection: the inverse is $\mathcal{T}^{-1}\left(F\right)= \eta^{-1/2}F\eta^{1/2}$, for all $F\in \mathsf{Eff}\left(\mathrm{A}_{\mathds{1}}\right)$. Finally, $\mathcal{T}\left(\mathds{1}\right)=\eta^{1/2}\mathds{1}\eta^{-1/2}=\mathds{1}$, which concludes the proof.
\end{proof}
This result is already sufficient to conclude that quasi-Hermiticity does not provide any meaningful extension of quantum theory, as systems are equivalent. Lemma~\ref{lem:bijection quasi-Hermitian} implies that there is a linear bijection between the corresponding set of states, which we can exploit to derive the states of a quasi-Hermitian system.
\begin{prop}
\label{lem:QHdual}
For every state  
$\mu \in  \mathsf{St}\left(\mathrm{A}_{\eta}\right)$, 
there exists a unique 
$\eta$-density matrix $\rho_\mu$ satisfying
\[
\mu\left(E\right) = \mathrm{tr}\,\rho_\mu E \qquad \forall E \in \mathsf{Eff}\left(\mathrm{A}_{\eta}\right).
\]
\end{prop}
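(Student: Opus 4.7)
The plan is to use the equivalence $\mathcal{T}(E) = \eta^{1/2} E \eta^{-1/2}$ between $\mathrm{A}_\eta$ and $\mathrm{A}_{\mathds{1}}$ from Lemma~\ref{lem:bijection quasi-Hermitian}, together with its dual from Lemma~\ref{lem:duality}, to reduce the claim to the analogous and well-known fact for standard quantum theory, namely Gleason's theorem. This converts the assertion about $\eta$-density matrices into an assertion about ordinary density matrices, which is already available.

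Concretely, given $\mu \in \mathsf{St}(\mathrm{A}_\eta)$, I would first transfer it to a standard quantum state $\nu := (\mathcal{T}')^{-1}(\mu) \in \mathsf{St}(\mathrm{A}_{\mathds{1}})$ via Lemma~\ref{lem:duality}, so that $\mu(E) = \nu(\eta^{1/2} E \eta^{-1/2})$ for every $E \in \mathsf{Eff}(\mathrm{A}_\eta)$. By Gleason's theorem (as recalled in the example after Definition~\ref{def:equivalence}), there is a unique density matrix $\sigma$ with $\nu(F) = \mathrm{tr}\, \sigma F$ for all $F \in \mathsf{Eff}(\mathrm{A}_{\mathds{1}})$. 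Substituting $F = \eta^{1/2} E \eta^{-1/2}$ and cycling the trace gives $\mu(E) = \mathrm{tr}\,(\eta^{-1/2} \sigma \eta^{1/2}) E$, which dictates the definition $\rho_\mu := \eta^{-1/2} \sigma \eta^{1/2}$. Checking that $\rho_\mu$ is an $\eta$-density matrix is routine: $\mathrm{tr}\, \rho_\mu = \mathrm{tr}\, \sigma = 1$ by cyclicity; $\eta \rho_\mu \eta^{-1} = \eta^{1/2} \sigma \eta^{-1/2} = \rho_\mu^\dagger$ (using $\sigma = \sigma^\dagger$) yields $\eta$-Hermiticity; and $\eta$-positive semidefiniteness reduces, via Definition~\ref{def:etale}, to the fact that $\eta \rho_\mu = \eta^{1/2} \sigma \eta^{1/2} \geq \mathbf{0}$, which is immediate from $\sigma \geq \mathbf{0}$.

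The only step that demands a moment's care is uniqueness, since the above only exhibits one concrete $\rho_\mu$ that works. If $\rho'$ were another $\eta$-density matrix representing $\mu$, then $\sigma' := \eta^{1/2} \rho' \eta^{-1/2}$ is, by essentially the computation of Lemma~\ref{lem:bijection quasi-Hermitian} run in reverse, a bona fide density matrix satisfying $\mathrm{tr}\, \sigma' F = \nu(F)$ for every $F \in \mathsf{Eff}(\mathrm{A}_{\mathds{1}})$. Gleason's uniqueness then forces $\sigma' = \sigma$, and hence $\rho' = \rho_\mu$. I do not anticipate a serious obstacle here: the construction is essentially forced by the bijection $\mathcal{T}$, and both existence and uniqueness are inherited from the corresponding Gleason statement, so the main thing to get right is the bookkeeping of which side of the $\eta^{\pm 1/2}$ conjugation each object lives on.
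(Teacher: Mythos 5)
Your proposal is correct and follows essentially the same route as the paper: transfer the state to $\mathrm{A}_{\mathds{1}}$ via the dual of the bijection $\mathcal{T}$ from Lemma~\ref{lem:bijection quasi-Hermitian}, represent it there by a density matrix $\sigma$, and conjugate back to obtain $\rho_\mu := \eta^{-1/2}\sigma\eta^{1/2}$, verifying the $\eta$-density-matrix properties exactly as the paper does. The only cosmetic difference is in the uniqueness step, where the paper cites Lemma~\ref{lem:linear algebra} directly while you pull uniqueness back from the Hermitian case via Gleason; both rest on the same fact that the effects span enough of $M_d\left(\mathds{C}\right)$ to make the trace representation unique.
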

\begin{proof}
By Lemma~\ref{lem:duality}, we know that there is a linear bijection $\mathcal{T}':\mathsf{St}\left(\mathrm{A}_{\mathds{1}}\right)\rightarrow\mathsf{St}\left(\mathrm{A}_{\eta}\right)$ constructed as the dual of the map $\mathcal{T}$ introduced in Lemma~\ref{lem:bijection quasi-Hermitian}. Consider $\nu\in \mathsf{St}\left(\mathrm{A}_{\mathds{1}}\right)$, which is such that $\nu\left(F\right)=\mathrm{tr}\:\sigma_{\nu}F$, where $\sigma_\nu$ is its associated density matrix, and $F\in\mathsf{Eff}\left(\mathrm{A}_{\mathds{1}}\right)$. Then, $\mu \in  \mathsf{St}\left(\mathrm{A}_{\eta}\right)$ is constructed as
\begin{align}\label{eq: T quasi-Hermitian}
\mu\left(E\right)&=\nu\left(\mathcal{T}\left(E\right)\right)=\nu\left(\eta^{1/2}E\eta^{-1/2}\right)\nonumber\\&=\mathrm{tr}\:\sigma_{\nu}\eta^{1/2}E\eta^{-1/2}=\mathrm{tr}\:\eta^{-1/2}\sigma_{\nu}\eta^{1/2}E,
\end{align}
where $E\in\mathsf{Eff}\left(\mathrm{A}_{\eta}\right).$

Therefore, one choice for the matrix representation of $\mu$ is $\rho_\mu := \eta^{-1/2}\sigma_\nu\eta^{1/2}$, which can be easily verified
to be $\eta$-Hermitian. Furthermore, $\rho_\mu \ge_{\eta} 0$ which follows from 
\begin{equation*}
    \left\langle\psi\middle|\eta \rho_\mu\middle|\psi\right\rangle = \left\langle\psi\middle|\eta\left(\eta^{-1/2}\sigma_\nu\eta^{1/2}\right)\middle|\psi\right\rangle
    =: \left\langle\phi\middle|\sigma_\nu\middle|\phi\right\rangle \ge 0,
\end{equation*}
where we have set $\left|\phi\right\rangle:=\eta^{1/2}\left|\psi\right\rangle$.
Finally, $\mu\left(\mathds{1}\right) = 1$ implies $\mathrm{tr}\,\rho_\mu=1$. Therefore $\rho_\mu$ is an $\eta$-density matrix. The uniqueness of $\rho_{\mu}$ follows from Lemma~\ref{lem:linear algebra}.
\end{proof}
With this proposition we concluded that a system with $\eta$-Hermitian observables leads to states that are
  represented by $\eta$-density matrices. 
 

\section{GPT with a combination of PT-symmetric and quasi-Hermitian constraints on  effects}\label{sec:PT+quasi}
In Section~\ref{sec:GPT-PT} we proved that the constraint of PT-symmetry alone
on observables gives rise to a trivial theory.
Therefore,
we now consider a system that is quasi-Hermitian for some $\eta$,
and then we impose the constraint of PT-symmetry on observables.
We model the constraint of PT-symmetric invariance on an $\eta$-Hermitian observable by introducing
an $\eta$-antiunitary operator.
\begin{defn}\label{def:etaantiunitary}
An antilinear operator $K_\eta$ is $\eta$-antiunitary if
\begin{equation*}
    \left\langle K_\eta\psi\middle|K_\eta\phi\right\rangle_\eta = \left\langle \psi\middle|\phi\right\rangle_\eta^* \quad \forall \left|\psi\right\rangle,\left|\phi\right\rangle \in \mathds{C}^d,
\end{equation*}
where $\langle\bullet |\bullet\rangle_\eta$ is the $\eta$-inner product defined in Eq.~\eqref{eq:etaIP}.
\end{defn}
\noindent In this section, we denote by $K_\eta$ 
any valid $\eta$-antiunitary operator that serves as a PT-symmetry in the $\eta$-inner product.
Note that if $\eta = \mathds{1}$, then the PT operator $K_\eta$ is antiunitary,
which is consistent with the literature,
as discussed in Section~\ref{sec:background}.

The main finding of this section is that a system with $\eta$-Hermitian, $K_\eta$-symmetric observables
is isomorphic to a real quantum system. 
To prove this result, we first focus on the special case 
where the allowed observables are Hermitian ($\eta=\mathds{1}$)
as well as $\kappa$-symmetric, $\kappa$ being complex conjugation in the canonical basis as in Section~\ref{sec:GPT-PT},
and show that we arrive at a real quantum system \cite{Wootters-real,Hardy-real,Baez,Hickey,Barnum2020composites,Scandolo-real1,Scandolo-real2,Acin-complex}.
After that, we extend the analysis to observables being Hermitian
as well as $K$-symmetric, where $K$ is any valid PT-symmetry (cf. Section~\ref{subsec:PT-symmetry}).
Finally we consider the case where  observables are $\eta$-Hermitian, for $\eta\neq\mathds{1}$ and $K_\eta$-symmetric,
and show that the resulting system is equivalent to a real quantum system. 

We first discuss how the constraints of $\eta$-Hermiticity and  $K_\eta$-symmetry on  observables translate into
constraints on the allowed effects on the system.
Any observable $O$ that is $\eta$-Hermitian as well as $K_\eta$-symmetric 
has a spectral decomposition in terms of
rank-1 projectors that are also $\eta$-Hermitian and $K_\eta$-symmetric (these are basic effects).
This observation follows from restricting Proposition~\ref{prop:PTspectral} to an $\eta$-Hermitian observable $O$
with unbroken $K_\eta$-symmetry.
Consequently, the set of all allowed effects of the system are $K_\eta$-symmetric 
and $\eta$-positive semidefinite, satisfying in addition $E \le_{\eta} \mathds{1}$.
For the special case of Hermitian ($\eta=\mathds{1})$, $\kappa$-symmetric observables we first focus on,
this characterization implies that the set of effects of the system A$_{\kappa,\mathds{1}}$
are given by
\begin{equation*}
    \mathsf{Eff}\left(\mathrm{A}_{\kappa,\mathds{1}}\right) = \left\{E: E^\dagger = E, E^* = E, \mathbf{0} \le E \le \mathds{1}\right\}.
\end{equation*}
Now we show that for the set of effects $\mathsf{Eff}\left(\mathrm{A}_{\kappa,\mathds{1}}\right)$,
the allowed set of states are density matrices with real entries in the canonical basis.
\begin{lem}\label{lem:real density matrix}
Each state $\nu \in \mathsf{St}\left(\mathrm{A}_{\kappa,\mathds{1}}\right)$ can be represented by a $\kappa$-symmetric, i.e., real density matrix.
\end{lem}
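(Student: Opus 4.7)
The plan is to show that the state $\nu$ can be represented by a real symmetric positive semidefinite matrix of trace one, and this is the $\kappa$-symmetric density matrix we seek. First I would characterize the span of the effect set $\mathsf{Eff}(\mathrm{A}_{\kappa,\mathds{1}})$: since its elements are real symmetric matrices with spectrum in $[0,1]$, and since every real symmetric matrix can be written as a real linear combination of such effects (e.g.\ by rescaling and shifting its spectral decomposition, which already uses real orthogonal projectors), the linear span is the whole real vector space of $d\times d$ real symmetric matrices.

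Next I would invoke the representation from Lemma~\ref{lem:linear algebra}, adapted as in Lemma~\ref{lem:dual}, which tells me there exists a matrix $T_{\nu}$ such that $\nu(E)=\mathrm{tr}\,T_{\nu}E$ for every $E\in\mathsf{Eff}(\mathrm{A}_{\kappa,\mathds{1}})$. Because the effect space is the real span of real symmetric matrices, one can take $T_{\nu}$ to lie in $M_d(\mathds{R})$, and furthermore (by symmetrizing against the effects, which are symmetric) to be itself real symmetric. This already shows $T_{\nu}$ is $\kappa$-symmetric.

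Then I would establish positivity and the trace condition. For every real unit vector $\ket{e}\in\mathds{R}^d$, the rank-$1$ projector $\ket{e}\!\bra{e}$ is a real symmetric effect; since $\nu$ is a state, $\mathrm{tr}\,T_{\nu}\ket{e}\!\bra{e}=\bra{e}T_{\nu}\ket{e}\geq 0$. For a real symmetric matrix, non-negativity of $\bra{e}T_{\nu}\ket{e}$ on all real vectors is equivalent to $T_{\nu}\succeq \mathbf{0}$ as a complex matrix, since the eigenvalues of a real symmetric matrix are real and achieved on real eigenvectors. Finally, $\nu(\mathds{1})=1$ forces $\mathrm{tr}\,T_{\nu}=1$, so $T_{\nu}$ is a real density matrix representing $\nu$.

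The only delicate step is the transfer from positivity on real rank-$1$ projectors to full positive semidefiniteness; the key observation that makes it routine is that a real symmetric matrix is diagonalizable over $\mathds{R}$ with real eigenvectors, so testing against real vectors alone suffices. An alternative, slightly more conceptual route would mirror the proof of Lemma~\ref{lem:bijection quasi-Hermitian}: exhibit a linear bijection between $\mathsf{Eff}(\mathrm{A}_{\kappa,\mathds{1}})$ and the effect set of a real quantum system and then dualize via Lemma~\ref{lem:duality} to read off the real density matrix, but the direct approach above is more immediate and self-contained.
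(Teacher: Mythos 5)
Your proposal is correct and follows essentially the same route as the paper: obtain a real representing matrix via the adaptation of Lemma~\ref{lem:linear algebra} (the paper cites Lemma~\ref{lem:dual}), get the trace condition from $\nu\left(\mathds{1}\right)=1$, and get positivity by evaluating on real rank-$1$ projectors. Your two extra observations --- that $T_{\nu}$ may be taken real \emph{symmetric} by symmetrizing against the symmetric effects, and that for a real symmetric matrix positivity on real vectors implies positive semidefiniteness over $\mathds{C}$ --- are details the paper leaves implicit, and they make the argument tighter rather than different.
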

\begin{proof}
We need to show that every $\nu \in \mathsf{St}\left(\mathrm{A}_{\kappa,\mathds{1}}\right)$ is 
represented by a $\sigma_\nu \in M_d\left(\mathds{R}\right)$ with $\sigma_\nu \ge 0$ and $\mathrm{tr}\,\sigma_\nu = 1$. 
The existence of a $\sigma_\nu \in M_d\left(\mathds{R}\right)$ is
a direct consequence of Lemma~\ref{lem:dual}. The normalization condition $\mathrm{tr}\,\sigma_\nu = 1$ follows
from $\mathrm{tr}\,\sigma_\nu = \nu\left(\mathds{1}\right) = 1$. Finally, 
\begin{equation*}
    \left\langle\psi\middle|\sigma_\nu\middle|\psi\right\rangle =\mathrm{tr}\,\sigma_\nu \ket{\psi}\bra{\psi} \in \left[0,1\right] \quad \forall \ket{\psi} \in \mathds{R}^d,
\end{equation*}
because $\ket{\psi}\bra{\psi} \in \mathsf{Eff}\left(\mathrm{A}_{\kappa,\mathds{1}}\right)$.
\end{proof}

We now move to the case in which  observables are $K$-symmetric 
and Hermitian.
We denote this system by $\mathrm{A}_{K,\mathds{1}}$  and the allowed set of effects is given by
\begin{equation*}
    \mathsf{Eff}\left(\mathrm{A}_{K,\mathds{1}}\right) = \left\{E: E^\dagger = E, KE = EK, \mathbf{0} \le E \le \mathds{1}\right\}.
\end{equation*}
Now we show that  system $\mathrm{A}_{K,\mathds{1}}$
is equivalent to a real quantum system.
\begin{prop}\label{prop:isomorphism real effects}
The systems $\mathrm{A}_{\kappa,\mathds{1}}$ and $\mathrm{A}_{\kappa,\mathds{1}}$ are equivalent.
\end{prop}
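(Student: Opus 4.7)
The plan is to build on the strategy of Theorem~\ref{thm:PT}, but with a crucial refinement: the similarity transformation relating $K$ to $\kappa$ must now be \emph{unitary}, not merely invertible, so that the induced map preserves Hermiticity in addition to the spectrum-structure already handled in Section~\ref{sec:GPT-PT}. The goal is to exhibit an explicit linear bijection $\mathcal{T}:\mathsf{Eff}\left(\mathrm{A}_{K,\mathds{1}}\right)\to\mathsf{Eff}\left(\mathrm{A}_{\kappa,\mathds{1}}\right)$ with $\mathcal{T}\left(\mathds{1}\right)=\mathds{1}$, in the sense of Definition~\ref{def:equivalence}.

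First, I would use the antiunitarity of $K$ to write $K=U\kappa$ for some unitary operator $U$. The involution property $K^{2}=\mathds{1}$ translates, using $\kappa U=U^{*}\kappa$, into $UU^{*}=\mathds{1}$, which together with $U^{\dagger}U=\mathds{1}$ forces $U=U^{T}$, i.e.\ $U$ is complex symmetric. At this point I would invoke the Autonne--Takagi factorization of a unitary symmetric matrix, which provides a unitary $W$ such that $U=WW^{T}$. A direct calculation (using $W^{T}W^{*}=\mathds{1}$) then gives $W^{\dagger}KW=\kappa$, equivalently $K=W\kappa W^{\dagger}$.

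Next, I would define $\mathcal{T}\left(E\right):=W^{\dagger}EW$ and verify that it has all the required properties. Unitarity of $W$ ensures that $\mathcal{T}$ preserves Hermiticity, positive semidefiniteness, and the bound $E\leq\mathds{1}$, and that $\mathcal{T}\left(\mathds{1}\right)=\mathds{1}$. The $\kappa$-symmetry of $\mathcal{T}\left(E\right)$ follows from
\[
\kappa\left(W^{\dagger}EW\right)=W^{\dagger}KEW=W^{\dagger}EKW=\left(W^{\dagger}EW\right)\kappa,
\]
using $K=W\kappa W^{\dagger}$ and $KE=EK$. Linearity is obvious, and bijectivity is witnessed by the inverse $F\mapsto WFW^{\dagger}$, which satisfies all the analogous properties in the opposite direction. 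Thus $\mathcal{T}$ realizes the equivalence of $\mathrm{A}_{K,\mathds{1}}$ and $\mathrm{A}_{\kappa,\mathds{1}}$ in the sense of Definition~\ref{def:equivalence}.

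The main obstacle is establishing the \emph{unitary} similarity $K=W\kappa W^{\dagger}$. The weaker similarity $K=S\kappa S^{-1}$ used in Theorem~\ref{thm:PT} came from a general theorem on products of a matrix with the inverse of its conjugate and does not suffice here: conjugation by a non-unitary $S$ in general destroys Hermiticity, and hence would map effects in $\mathsf{Eff}\left(\mathrm{A}_{K,\mathds{1}}\right)$ outside $\mathsf{Eff}\left(\mathrm{A}_{\kappa,\mathds{1}}\right)$. The refinement genuinely uses the fact that $K$ is antiunitary (not merely antilinear), which forces $U$ to be unitary symmetric and thereby enables the Autonne--Takagi step; this is the only place where the extra Hermiticity constraint of the present section, as opposed to Section~\ref{sec:GPT-PT}, is essentially used.
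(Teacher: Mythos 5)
Your proposal is correct and follows essentially the same route as the paper's proof: decompose $K=U\kappa$ with $U$ unitary symmetric, apply the Autonne--Takagi factorization to obtain $K=V\kappa V^{\dagger}$ for a unitary $V$, and conjugate effects by that unitary. The only cosmetic difference is that you verify $\kappa$-symmetry as a commutation relation rather than as $\kappa\mathcal{T}\left(E\right)\kappa=\mathcal{T}\left(E\right)$, which is equivalent.
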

\begin{proof}
We prove this proposition by constructing a bijection $\mathcal{T}:\mathsf{Eff}\left(\mathrm{A}_{K,\mathds{1}}\right)\to
\mathsf{Eff}\left(\mathrm{A}_{\kappa,\mathds{1}}\right)$. 
To begin, recall $\kappa^2=\mathds{1}$. Then $K = \left(K\kappa\right)\kappa=:U\kappa$.
$U$ is linear (it is the composition of two antilinear operators) and unitary, as it is the composition of two antiunitary operators (cf.\ Lemma~\ref{lem:productantiunitary}). 
The fact that  $K^2 = \mathds{1}$ implies that 
$U\kappa U\kappa=UU^*=\mathds{1}$.
This implies that $U^*=U^{\dagger}$, from which $U = U^{\rm T}$. 
Then by Autonne-Takagi factorization~\cite{horn1994},
$U = VV^{\rm T}$ for some unitary matrix $V$, so that
\begin{equation}
\label{eq:PTdecomposition}
    K =U\kappa=  VV^{\rm T}\kappa= V\kappa \left(\kappa V^{\rm T}\kappa\right) = V\kappa V^\dagger.
\end{equation}
We now show that
\begin{align}\label{eq:T}
    \mathcal{T}:&\mathsf{Eff}\left(\mathrm{A}_{K,\mathds{1}}\right) \to \mathsf{Eff}\left(\mathrm{A}_{\kappa,\mathds{1}}\right) \nonumber\\ 
    &E \mapsto V^\dagger EV
\end{align}
is the required bijection.
Observe that this definition is well posed:
\begin{align*}
    \kappa \mathcal{T}\left(E\right) \kappa &= \left(\kappa V^\dagger\right) E \left(V \kappa\right)  \nonumber\\
    &= \left(V^\dagger K\right)E\left(K V\right) \nonumber\\
    &= \left(V^\dagger K\right)K^{-1}EK\left(KV\right) \nonumber\\ 
    &= V^\dagger E V\\
    &= \mathcal{T}\left(E\right),
\end{align*}
where the second equality follows from Eq.~\eqref{eq:PTdecomposition}, and the third and fourth ones follow by using $KE = EK$ and 
$K^2 = \mathds{1}$    respectively.
Therefore, $\mathcal{T}\left(E\right)$ is a matrix with real entries. 
Clearly $V^\dagger E V$ is Hermitian, and
as $\mathcal{T}$ is a similarity transformation, 
$\mathcal{T}\left(E\right)$ and $E$ have the same spectrum, so $\mathbf{0}\leq \mathcal{T}\left(E\right) \le \mathds{1}$. This shows that $\mathcal{T}\left(E\right)\in\mathsf{Eff}\left(\mathrm{A}_{\kappa,\mathds{1}}\right)$. Further,  $\mathcal{T}$ is a linear bijection, with inverse $F\mapsto VFV^\dagger$, $F\in\mathsf{Eff}\left(\mathrm{A}_{\kappa,\mathds{1}}\right)$. Finally, $\mathcal{T}\left(\mathds{1}\right)=V^\dagger \mathds{1}V=\mathds{1}$. This completes the proof of the equivalence.
\end{proof}

As a corollary, we have a linear bijection between the corresponding set of states, which we will now use to characterize the states of the system $\mathrm{A}_{K,\mathds{1}}$ in matrix form.

\begin{thm}\label{thm:K-symmetric matrices}
Each state $\mu \in \mathsf{St}\left(\mathrm{A}_{K,\mathds{1}}\right)$ can be represented by a $K$-symmetric density matrix.
\end{thm}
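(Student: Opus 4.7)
The plan is to leverage the equivalence $\mathrm{A}_{K,\mathds{1}}\simeq \mathrm{A}_{\kappa,\mathds{1}}$ established in Proposition~\ref{prop:isomorphism real effects} together with the state-level bijection given by Lemma~\ref{lem:duality}, and then transport the real density matrix from Lemma~\ref{lem:real density matrix} back to a matrix representation for $\mu$. The key ingredient is the factorization $K=V\kappa V^{\dagger}$ coming from Autonne–Takagi (Eq.~\eqref{eq:PTdecomposition}), which I will use to conjugate a real density matrix into a $K$-symmetric one.

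Concretely, given $\mu\in \mathsf{St}\left(\mathrm{A}_{K,\mathds{1}}\right)$, Lemma~\ref{lem:duality} applied to the bijection $\mathcal{T}:\mathsf{Eff}\left(\mathrm{A}_{K,\mathds{1}}\right)\to \mathsf{Eff}\left(\mathrm{A}_{\kappa,\mathds{1}}\right)$, $E\mapsto V^{\dagger}EV$, produces a state $\nu\in\mathsf{St}\left(\mathrm{A}_{\kappa,\mathds{1}}\right)$ with $\mu\left(E\right)=\nu\left(V^{\dagger}EV\right)$ for every $E\in\mathsf{Eff}\left(\mathrm{A}_{K,\mathds{1}}\right)$. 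By Lemma~\ref{lem:real density matrix}, $\nu$ is represented by a real (i.e.\ $\kappa$-symmetric) density matrix $\sigma_{\nu}$, so that
\[
\mu\left(E\right)=\mathrm{tr}\,\sigma_{\nu}V^{\dagger}EV=\mathrm{tr}\,\left(V\sigma_{\nu}V^{\dagger}\right)E.
\]
I therefore propose $\rho_{\mu}:=V\sigma_{\nu}V^{\dagger}$ as the candidate representation.

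The remaining work is to verify that $\rho_{\mu}$ is a $K$-symmetric density matrix. Hermiticity, positive semidefiniteness, and unit trace are immediate, because they are preserved by unitary conjugation and $\sigma_{\nu}$ is a real density matrix. The main (mildly delicate) step is $K$-symmetry, which requires a careful manipulation of the antilinear operator $\kappa$: using $K=V\kappa V^{\dagger}$, the identity $V^{\dagger}V=\mathds{1}$, and the rule $\kappa A = A^{*}\kappa$ for a complex matrix $A$, one computes
\begin{align*}
K\rho_{\mu}&=V\kappa V^{\dagger}\cdot V\sigma_{\nu}V^{\dagger}=V\kappa\sigma_{\nu}V^{\dagger}=V\sigma_{\nu}^{*}\kappa V^{\dagger},\\
\rho_{\mu}K&=V\sigma_{\nu}V^{\dagger}\cdot V\kappa V^{\dagger}=V\sigma_{\nu}\kappa V^{\dagger}=V\sigma_{\nu}^{*}\kappa V^{\dagger},
\end{align*}
where the last equality in the second line uses $\sigma_{\nu}^{*}=\sigma_{\nu}$. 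Hence $K\rho_{\mu}=\rho_{\mu}K$, so $\rho_{\mu}$ is the desired $K$-symmetric density matrix. I expect the only place where care is needed is the antilinear commutation just performed; everything else follows mechanically from the machinery already assembled in Proposition~\ref{prop:isomorphism real effects} and Lemma~\ref{lem:real density matrix}.
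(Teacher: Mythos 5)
Your proposal is correct and follows essentially the same route as the paper: the paper also dualizes the effect bijection $E\mapsto V^{\dagger}EV$ via Lemma~\ref{lem:duality}, sets $\rho_{\mu}:=V\sigma_{\nu}V^{\dagger}$ with $\sigma_{\nu}$ the real density matrix from Lemma~\ref{lem:real density matrix}, and verifies $K$-symmetry from $K=V\kappa V^{\dagger}$ and $\kappa\sigma_{\nu}\kappa=\sigma_{\nu}$. The only cosmetic difference is that the paper checks $K\rho_{\mu}K=\rho_{\mu}$ rather than $K\rho_{\mu}=\rho_{\mu}K$, which is equivalent since $K^{2}=\mathds{1}$.
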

\begin{proof}
We employ the usual dual construction of Lemma~\ref{lem:duality}: we consider $\mathcal{T}':\mathsf{St}\left(\mathrm{A}_{\kappa,\mathds{1}}\right)\rightarrow\mathsf{St}\left(\mathrm{A}_{K,\mathds{1}}\right)$. Take $\nu\in \mathsf{St}\left(\mathrm{A}_{\kappa,\mathds{1}}\right)$, which is such that $\nu\left(F\right)=\mathrm{tr}\:\sigma_{\nu}F$, where $\sigma_\nu$ is its associated real density matrix (see Lemma~\ref{lem:real density matrix}), and $F\in\mathsf{Eff}\left(\mathrm{A}_{\kappa,\mathds{1}}\right)$. Then, $\mu \in  \mathsf{St}\left(\mathrm{A}_{K,\mathds{1}}\right)$ is constructed as
\begin{align*}
\mu\left(E\right)&=\nu\left(\mathcal{T}\left(E\right)\right)=\nu\left( V^\dagger EV\right)\\&=\mathrm{tr}\:\sigma_{\nu} V^\dagger EV=\mathrm{tr}\:V\sigma_{\nu}V^\dagger E,
\end{align*}
where $E\in\mathsf{Eff}\left(\mathrm{A}_{K,\mathds{1}}\right)$.
Therefore, one choice for $\rho_\mu$ is $\rho_\mu := V \sigma_\nu V^\dagger$. Clearly $\rho_\mu \ge 0$ and $\mathrm{tr}\,\rho_\mu = 1$. 
We now show that $\rho_\mu$ is
$K$-symmetric. This is revealed by 
\begin{align*}
    K\rho_\mu K &= \left(V\kappa V^\dagger\right) V \sigma_\nu V^\dagger \left(V\kappa V^\dagger\right) \nonumber \\
    &= V\kappa  \sigma_\nu \kappa V^\dagger \nonumber \\
    &= V\sigma_\nu V^\dagger \\
    &= \rho_\mu,
\end{align*}
where we have used Eq.~\eqref{eq:PTdecomposition}, and $\kappa \sigma_\nu \kappa = \sigma_\nu$ in the third equality.
\end{proof}

We finally come to the most general case, in which effects are $K_\eta$-symmetric and $\eta$-Hermitian.
We assume that $K_\eta$ is an $\eta$-antiunitary operator (cf.\ Definition~\ref{def:etaantiunitary}),
taking the role of PT-symmetry in the $\eta$-inner product. In this case, we have
\begin{align*}
 &\mathsf{Eff}\left(\mathrm{A}_{K_\eta,\eta}\right)\\&= \left\{E: E^\dagger = \eta E \eta^{-1}, K_\eta E = EK_\eta, \mathbf{0} \le_\eta E \le_\eta \mathds{1}\right\}.
\end{align*}
We next prove a few lemmas required for proving the main result.
The first of these lemmas constructs an $\eta$-equivalent of complex conjugation. 
\begin{lem}
The operator $\kappa_\eta = \eta^{-1/2}\kappa\eta^{1/2}$ is $\eta$-antiunitary. 
Moreover, $\kappa_\eta^2=\mathds{1}$.
\end{lem}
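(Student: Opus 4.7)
The plan is to verify the two claimed properties by direct computation from the definitions, using only $\kappa^2 = \mathds{1}$, the antiunitarity of $\kappa$ with respect to the standard inner product, and the Hermiticity of $\eta^{1/2}$ and $\eta^{-1/2}$ (which follows from the positivity of $\eta$).

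First, to establish that $\kappa_\eta^2 = \mathds{1}$, I would apply $\kappa_\eta$ twice to an arbitrary vector $\ket{\psi}$. The composition yields $\eta^{-1/2}\kappa \eta^{1/2}\eta^{-1/2}\kappa \eta^{1/2}\ket{\psi}$; the middle factor $\eta^{1/2}\eta^{-1/2}$ cancels to $\mathds{1}$, and then $\kappa^2 = \mathds{1}$ together with $\eta^{-1/2}\eta^{1/2} = \mathds{1}$ gives $\kappa_\eta^2 \ket{\psi} = \ket{\psi}$.

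Second, for the $\eta$-antiunitarity, I would expand the left-hand side of Definition~\ref{def:etaantiunitary} using Eq.~\eqref{eq:etaIP}:
\[
\left\langle \kappa_\eta \psi \middle| \kappa_\eta \phi \right\rangle_\eta
= \left\langle \eta^{-1/2}\kappa\eta^{1/2}\psi \middle|\, \eta \,\middle| \eta^{-1/2}\kappa\eta^{1/2}\phi \right\rangle.
\]
Using the Hermiticity of $\eta^{-1/2}$, I would move the outer $\eta^{-1/2}$ on the bra side across the standard inner product so that it combines with $\eta$ in the middle, giving $\eta^{-1/2}\cdot \eta \cdot \eta^{-1/2} = \mathds{1}$. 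What remains is $\left\langle \kappa\eta^{1/2}\psi \middle| \kappa\eta^{1/2}\phi \right\rangle$. Invoking the antiunitarity of $\kappa$ with respect to the standard inner product collapses this to $\left\langle \eta^{1/2}\psi \middle| \eta^{1/2}\phi \right\rangle^*$, which (using the Hermiticity of $\eta^{1/2}$) equals $\left\langle \psi \middle|\eta\middle| \phi \right\rangle^* = \left\langle \psi \middle| \phi \right\rangle_\eta^*$, as required.

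The only subtlety to watch out for is the antilinearity of $\kappa$: since $\eta^{1/2}$ is only Hermitian and generally not real in the canonical basis, one cannot commute $\kappa$ with $\eta^{\pm 1/2}$. Fortunately, neither computation requires moving $\kappa$ past $\eta^{\pm 1/2}$; the half-powers of $\eta$ either sit as the outermost operators of a ket/bra (where Hermiticity lets them slide through the standard inner product) or meet their own inverses to form the identity. This is why I expect the lemma to follow from a short, careful bookkeeping exercise rather than any substantive argument.
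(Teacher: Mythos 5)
Your proof is correct and follows essentially the same route as the paper's: a direct verification of Definition~\ref{def:etaantiunitary} by expanding the $\eta$-inner product, plus the one-line cancellation for $\kappa_\eta^2=\mathds{1}$. The only cosmetic difference is that the paper commutes $\kappa$ past $\eta^{1/2}$, generating entrywise conjugates $\left(\eta^{1/2}\right)^*$, whereas you keep $\kappa$ in place and invoke its antiunitarity wholesale, which avoids that bookkeeping.
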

\begin{proof}First of all, note that $\kappa_\eta$ is antilinear due to the presence of $\kappa$. The proof follows from 
\begin{align*}
     \left\langle\kappa_\eta\psi\middle|\kappa_\eta\phi\right\rangle_\eta &= \left\langle\kappa_\eta\psi\middle|\eta \eta^{-1/2}\left(\kappa\eta^{1/2}\kappa\right)\left(\kappa\phi\right)\right\rangle \\
    &= \left\langle\kappa_\eta\psi\middle|\eta^{1/2}(\eta^{1/2})^*\phi^*\right\rangle,
\end{align*}
where we have used the definitions of $\eta$-inner product and of $\kappa_\eta$,
and the properties of $\kappa$. Now
\begin{align*}
     \left\langle\kappa_\eta\psi\middle|\kappa_\eta\phi\right\rangle_\eta &=\left\langle\phi^*\middle|\left(\eta^{1/2}\right)^*\eta^{1/2}\middle|\kappa_\eta\psi\right\rangle^* \nonumber\\
    &= \left\langle\phi^*\middle|\left(\eta^{1/2}\right)^*\eta^{1/2}\eta^{-1/2}\kappa\eta^{1/2}\middle|\psi\right\rangle^*\\ \nonumber
    &= \left\langle\phi^*\middle|\left(\eta^{1/2}\right)^*\left(\eta^{1/2}\right)^*\middle|\psi^*\right\rangle^*
    \end{align*}
    where we have used the properties of $\kappa$ and the definition of $\kappa_\eta$ again. The properties of complex conjugation yield
\begin{equation*}  
      \left\langle\kappa_\eta\psi\middle|\kappa_\eta\phi\right\rangle_\eta = 
    \left\langle\phi\middle|\eta\middle|\psi\right\rangle
   = \left\langle\psi\middle|\phi\right\rangle_\eta^*.
\end{equation*}
The second property of $\kappa_\eta$, namely $\kappa_\eta^2 = \mathds{1}$, follows from
\begin{equation*}
    \kappa_\eta^2=\eta^{-1/2}\kappa\eta^{1/2}\eta^{-1/2}\kappa\eta^{1/2}=\eta^{-1/2}\kappa^2\eta^{1/2}=\mathds{1}.
\end{equation*}
\end{proof}
The next two lemmas express some useful properties of $\eta$-antiunitary operators.
\begin{lem}
\label{lem:productantiunitary}
The product of two $\eta$-antiunitary operators is $\eta$-unitary.
\end{lem}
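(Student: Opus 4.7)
The plan is to verify the $\eta$-unitarity condition directly from the defining property of $\eta$-antiunitarity, taking advantage of the fact that conjugation is an involution. Let $K_{\eta}^{(1)}$ and $K_{\eta}^{(2)}$ be two $\eta$-antiunitary operators and set $M := K_{\eta}^{(1)} K_{\eta}^{(2)}$. My first step will be to note that $M$ is linear, since it is the composition of two antilinear maps. This is needed because $\eta$-unitarity in Definition~\ref{def:qU} is formulated for linear operators.

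Next, I would verify that $M$ preserves the $\eta$-inner product. For arbitrary $\ket{\psi},\ket{\phi}\in\mathds{C}^d$, the plan is to apply Definition~\ref{def:etaantiunitary} twice:
\begin{align*}
\langle M\psi | M\phi\rangle_{\eta}
&= \langle K_{\eta}^{(1)}(K_{\eta}^{(2)}\psi)\,\big|\,K_{\eta}^{(1)}(K_{\eta}^{(2)}\phi)\rangle_{\eta}\\
&= \langle K_{\eta}^{(2)}\psi \,\big|\, K_{\eta}^{(2)}\phi\rangle_{\eta}^{*}\\
&= \bigl(\langle \psi | \phi\rangle_{\eta}^{*}\bigr)^{*}\\
&= \langle \psi | \phi\rangle_{\eta}.
\end{align*}
Thus $M$ preserves the $\eta$-inner product.

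Finally, I would translate this into the operator form of Definition~\ref{def:qU}. Since $\langle M\psi|M\phi\rangle_{\eta}=\langle\psi|M^{\dagger}\eta M|\phi\rangle$ and $\langle\psi|\phi\rangle_{\eta}=\langle\psi|\eta|\phi\rangle$, the identity above holds for all $\ket{\psi},\ket{\phi}$, so $M^{\dagger}\eta M=\eta$, which is precisely the $\eta$-unitarity condition. There is no real obstacle here: the argument is a two-line application of the definitions, with the only subtlety being to remember that two antilinear maps compose to a linear map so that the result indeed lives in the category where $\eta$-unitarity is defined.
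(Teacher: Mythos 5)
Your proposal is correct and follows essentially the same route as the paper's proof: apply Definition~\ref{def:etaantiunitary} twice so that the two complex conjugations cancel, and conclude that the (linear) composite preserves the $\eta$-inner product, i.e.\ satisfies $M^{\dagger}\eta M=\eta$. The only difference is that you spell out the final translation into the operator form of Definition~\ref{def:qU}, which the paper leaves implicit.
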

\begin{proof}
Let $K_\eta^{\left(1\right)},K_\eta^{\left(2\right)} \in M_d\left(\mathds{C}\right)$ be two $\eta$-antiunitary operators. Invoking the definition of $\eta$-antiunitarity twice,
we get
\begin{align*}
    \left\langle K_\eta^{(1)}K_\eta^{(2)}\psi\middle| K_\eta^{(1)}K_\eta^{(2)}\phi\right\rangle_\eta &= \left\langle K_\eta^{(2)}\psi\middle| K_\eta^{(2)}\phi\right\rangle_\eta^* \\&= \left\langle\psi\middle|\phi\right\rangle_\eta \quad
    \forall \ket{\psi},\ket{\phi} \in \mathds{C}^d,
\end{align*}
which proves that $K_\eta^{\left(1\right)}K_\eta^{\left(2\right)}$ is an $\eta$-unitary operator.
\end{proof}
\begin{lem}
\label{lem:etaantiunitary}
Any $\eta$-antiunitary operator $K_\eta$ can be expressed as $K_\eta=U_\eta\kappa_\eta$, where $U_\eta$ is $\eta$-unitary.
\end{lem}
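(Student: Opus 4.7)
The plan is to mimic the standard quantum-mechanical trick for writing an antiunitary operator as a unitary times complex conjugation, but now adapted to the $\eta$-inner product via the ``$\eta$-complex conjugation'' $\kappa_\eta$ constructed in the preceding lemma.

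Specifically, I would define
\[
U_\eta := K_\eta \kappa_\eta.
\]
Two observations then finish the proof almost immediately. First, using the identity $\kappa_\eta^2 = \mathds{1}$ from the preceding lemma, multiplication on the right by $\kappa_\eta$ recovers $K_\eta$:
\[
U_\eta \kappa_\eta = K_\eta \kappa_\eta^2 = K_\eta,
\]
which is exactly the desired factorisation. Second, $U_\eta$ is indeed linear (being a composition of two antilinear operators), so it makes sense to ask whether it is $\eta$-unitary in the sense of Definition~\ref{def:qU}.

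To verify the $\eta$-unitarity of $U_\eta$, I would invoke Lemma~\ref{lem:productantiunitary} directly: the product of two $\eta$-antiunitary operators is $\eta$-unitary. Since $K_\eta$ is $\eta$-antiunitary by hypothesis and $\kappa_\eta$ is $\eta$-antiunitary by the preceding lemma, the composition $U_\eta = K_\eta \kappa_\eta$ is automatically $\eta$-unitary. There is no genuine obstacle here; the content of the lemma is entirely packaged into the earlier results, so the proof is essentially a one-line definition followed by two one-line citations. The only subtlety worth highlighting is that one must use $\kappa_\eta^2 = \mathds{1}$ (rather than some weaker invertibility statement) to ensure that the factorisation really reads $U_\eta \kappa_\eta$ rather than $U_\eta \kappa_\eta^{-1}$ with $\kappa_\eta^{-1} \neq \kappa_\eta$.
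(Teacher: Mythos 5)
Your proof is correct and follows essentially the same route as the paper: both define $U_\eta := K_\eta\kappa_\eta$, use $\kappa_\eta^2=\mathds{1}$ to recover $K_\eta = U_\eta\kappa_\eta$, and cite Lemma~\ref{lem:productantiunitary} together with the $\eta$-antiunitarity of $\kappa_\eta$ to conclude that $U_\eta$ is $\eta$-unitary.
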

\begin{proof}
By the properties of $\kappa_\eta$, we have $K_\eta = \left(K_\eta\kappa_\eta\right)\kappa_\eta$, and  
$U_\eta := K_\eta\kappa_\eta$ is $\eta$-unitary by Lemma~\ref{lem:productantiunitary}. 
\end{proof}
The next lemma links $\eta$-antiunitary operators with standard antiunitary operators.
\begin{lem}
\label{lem:Santiunitary}
If $K_\eta$ is an $\eta$-antiunitary operator, then $K := \eta^{1/2}K_\eta\eta^{-1/2}$ is a standard antiunitary operator.
\end{lem}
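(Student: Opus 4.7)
The plan is to unpack both the definition of the standard antiunitarity condition on $K$ and the definition of $\eta$-antiunitarity on $K_\eta$ (Definition~\ref{def:etaantiunitary}), and then verify them by direct substitution. Antilinearity of $K$ is immediate: $K$ is the composition of the antilinear operator $K_\eta$ with the linear operators $\eta^{\pm 1/2}$, so $K$ is itself antilinear. The bulk of the argument is therefore to check that $K$ preserves the standard inner product up to complex conjugation.

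To this end, I would introduce the auxiliary vectors $\ket{\psi'}:=\eta^{-1/2}\ket{\psi}$ and $\ket{\phi'}:=\eta^{-1/2}\ket{\phi}$ for arbitrary $\ket{\psi},\ket{\phi}\in\mathds{C}^d$, so that
\[
\langle K\psi | K\phi\rangle = \langle \eta^{1/2}K_\eta\psi' | \eta^{1/2}K_\eta\phi'\rangle = \langle K_\eta\psi' | \eta | K_\eta\phi'\rangle,
\]
using self-adjointness of $\eta^{1/2}$. By Eq.~\eqref{eq:etaIP}, the right-hand side equals $\langle K_\eta\psi' | K_\eta\phi'\rangle_\eta$, to which I can apply the hypothesis that $K_\eta$ is $\eta$-antiunitary (Definition~\ref{def:etaantiunitary}). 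This yields $\langle \psi'|\phi'\rangle_\eta^* = \langle \psi'|\eta|\phi'\rangle^*$. Substituting back $\ket{\psi'}=\eta^{-1/2}\ket{\psi}$ and $\ket{\phi'}=\eta^{-1/2}\ket{\phi}$, and again using self-adjointness of $\eta^{-1/2}$ together with $\eta^{-1/2}\eta\eta^{-1/2}=\mathds{1}$, collapses the expression to $\langle \psi|\phi\rangle^*$, as required.

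There is no real obstacle here; the proof is a one-line calculation once the correct substitution is made. The only subtle point, which I would flag explicitly for the reader, is the need to take the complex conjugate of the scalar $\langle \psi'|\eta|\phi'\rangle$ rather than act with $\kappa$ on any vector, so that the manipulations remain at the level of ordinary Hermitian (linear) operators sandwiched between kets and bras. With that care, the result follows directly from the definitions.
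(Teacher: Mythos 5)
Your proof is correct, and it takes a genuinely different and more direct route than the paper's. The paper first builds machinery: it introduces the $\eta$-conjugation $\kappa_\eta=\eta^{-1/2}\kappa\eta^{1/2}$, decomposes $K_\eta=U_\eta\kappa_\eta$ with $U_\eta$ being $\eta$-unitary (Lemmas~\ref{lem:productantiunitary} and~\ref{lem:etaantiunitary}), and then proves antiunitarity of $K$ by showing that the linear operator $K\kappa$ is unitary, via a chain of matrix identities involving $\left(\eta^{\pm 1/2}\right)^*$ and the $\eta$-unitarity condition of Definition~\ref{def:qU}. You instead verify the defining property $\left\langle K\psi\middle|K\phi\right\rangle=\left\langle\psi\middle|\phi\right\rangle^*$ head-on: substituting $\ket{\psi'}=\eta^{-1/2}\ket{\psi}$, using self-adjointness of $\eta^{\pm1/2}$ (which holds since $\eta$ is positive definite) to convert the standard inner product of the images into the $\eta$-inner product $\left\langle K_\eta\psi'\middle|K_\eta\phi'\right\rangle_\eta$, applying Definition~\ref{def:etaantiunitary}, and collapsing $\left\langle\psi'\middle|\eta\middle|\phi'\right\rangle^*$ back to $\left\langle\psi\middle|\phi\right\rangle^*$. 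Your argument is shorter, avoids complex conjugates of matrix entries entirely, and stays at the level of vectors and Hermitian operators; its only cost is that it does not produce the intermediate objects ($\kappa_\eta$, the factorization $K_\eta=U_\eta\kappa_\eta$) that the paper reuses in the surrounding development. Both establish antilinearity the same trivial way.
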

\begin{proof}
We know that $K_\eta\kappa_\eta$ is $\eta$-unitary by Lemma~\ref{lem:etaantiunitary}, 
and therefore $\left(K_\eta\kappa_\eta\right)^\dagger\eta \left(K_\eta\kappa_\eta\right) \eta^{-1} = \mathds{1}$ by Definition~\ref{def:qU}. This implies that we also have
\begin{equation}
\label{eta-antiunitaryAeta}
 \left(K_\eta\kappa_\eta\right) \eta^{-1}\left(K_\eta\kappa_\eta\right)^\dagger\eta  = \mathds{1}.
\end{equation}
The left-hand side of this equation can be simplified to
\begin{align*}
    &\left(K_\eta\kappa_\eta\right) \eta^{-1}\left(K_\eta\kappa_\eta\right)^\dagger\eta \\
    &=  K_\eta \left(\eta^{-1/2}\kappa \eta^{1/2}\right) \eta^{-1} \left[K_\eta \left(\eta^{-1/2}\kappa \eta^{1/2}\right)\right]^\dagger\eta \nonumber\\
    &=  K_\eta \kappa \left(\eta^{-1/2}\right)^*\eta^{1/2} \eta^{-1}\left[K_\eta \kappa\left(\eta^{-1/2}\right)^* \eta^{1/2}\right]^\dagger\eta  \nonumber\\
    &= K_\eta \kappa \left(\eta^{-1/2}\right)^*\eta^{1/2} \eta^{-1}\eta^{1/2}\left(\eta^{-1/2}\right)^*\left(K_\eta \kappa \right)^\dagger\eta  \nonumber\\
    &=  K_\eta\kappa \left(\eta^{-1}\right)^*\left(K_\eta\kappa\right)^\dagger \eta,
\end{align*}
which allows us to recast Eq.~\eqref{eta-antiunitaryAeta} as
\begin{equation}
\label{eq:step}
    K_\eta\kappa \left(\eta^{-1}\right)^*\left(K_\eta\kappa\right)^\dagger \eta = \mathds{1}.
\end{equation}
Now we are ready to show that $K\kappa$ is a unitary, which means that $K$ is antiunitary.
\begin{align*}
     K\kappa \left(K\kappa\right)^\dagger &= \eta^{1/2}K_\eta\eta^{-1/2}\kappa\left(\eta^{1/2}K_\eta\eta^{-1/2}\kappa\right)^\dagger \nonumber\\
    &= \eta^{1/2}K_\eta\kappa\left(\eta^{-1/2}\right)^*\left[\eta^{1/2}K_\eta\kappa\left(\eta^{-1/2}\right)^*\right]^\dagger \nonumber\\
    &= \eta^{1/2}K_\eta\kappa\left(\eta^{-1/2}\right)^*\left(\eta^{-1/2}\right)^* \left(K_\eta\kappa\right)^\dagger \eta^{1/2} \nonumber\\
    &= \eta^{1/2} \left[K_\eta \kappa  \left(\eta^{-1}\right)^*\left(K_\eta\kappa\right)^\dagger \eta \right] \eta^{-1/2} \nonumber \\
    &= \eta^{1/2}\mathds{1}\eta^{-1/2} \nonumber\\
    &= \mathds{1},
\end{align*}
where we have used Eq.~\eqref{eq:step} in the second last equality.
\end{proof}

We now prove the key proposition, which establishes the equivalence of the systems
${\rm A}_{K_\eta,\eta}$ and ${\rm A}_{K,\mathds{1}}$, where $K_\eta$ is an $\eta$-antiunitary operator and $K$, defined in Lemma~\ref{lem:Santiunitary},
represents PT-symmetry. 
\begin{prop}
The systems $\mathrm{A}_{K_\eta,\eta}$ and $\mathrm{A}_{K,\mathds{1}}$,  where $K= \eta^{1/2}K_\eta\eta^{-1/2}$, are equivalent.
\end{prop}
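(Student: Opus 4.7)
The plan is to use the same similarity transformation $\mathcal{T}: E \mapsto \eta^{1/2} E \eta^{-1/2}$ that already worked in Lemma~\ref{lem:bijection quasi-Hermitian}, and verify that it now also preserves the new symmetry constraint. Concretely, I would define $\mathcal{T}:\mathsf{Eff}(\mathrm{A}_{K_\eta,\eta})\to\mathsf{Eff}(\mathrm{A}_{K,\mathds{1}})$ by $\mathcal{T}(E):=\eta^{1/2}E\eta^{-1/2}$ and show (i) that $\mathcal{T}(E)$ lands in $\mathsf{Eff}(\mathrm{A}_{K,\mathds{1}})$, (ii) that $\mathcal{T}$ is a linear bijection, and (iii) that $\mathcal{T}(\mathds{1})=\mathds{1}$.

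For (i), the Hermiticity of $\mathcal{T}(E)$ and the bound $\mathbf{0}\le\mathcal{T}(E)\le\mathds{1}$ follow verbatim from the corresponding computations in the proof of Lemma~\ref{lem:bijection quasi-Hermitian}, since those only used $\eta$-Hermiticity and $\mathbf{0}\le_\eta E \le_\eta \mathds{1}$. The only genuinely new content is the check that $\mathcal{T}(E)$ commutes with $K=\eta^{1/2}K_\eta\eta^{-1/2}$. Using the definition of $K$ and the $K_\eta$-symmetry of $E$,
\begin{align*}
K\,\mathcal{T}(E) &= \eta^{1/2}K_\eta\eta^{-1/2}\cdot\eta^{1/2}E\eta^{-1/2}=\eta^{1/2}K_\eta E\,\eta^{-1/2}\\
&= \eta^{1/2}E K_\eta\,\eta^{-1/2}=\eta^{1/2}E\eta^{-1/2}\cdot\eta^{1/2}K_\eta\eta^{-1/2}=\mathcal{T}(E)\,K,
\end{align*}
so $\mathcal{T}(E)$ is indeed $K$-symmetric.

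For (ii), the inverse map is $F\mapsto \eta^{-1/2}F\eta^{1/2}$, and one verifies that it sends $\mathsf{Eff}(\mathrm{A}_{K,\mathds{1}})$ into $\mathsf{Eff}(\mathrm{A}_{K_\eta,\eta})$: $\eta$-Hermiticity and the $\eta$-operator bounds follow by running the Lemma~\ref{lem:bijection quasi-Hermitian} computations in reverse, and the $K_\eta$-symmetry from $K_\eta=\eta^{-1/2}K\eta^{1/2}$ by the same commutation argument as above. Linearity is obvious, and $\mathcal{T}(\mathds{1})=\eta^{1/2}\mathds{1}\eta^{-1/2}=\mathds{1}$ completes (iii).

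There is no real obstacle here: the heavy lifting for the Hermiticity and positivity transport was already done in Lemma~\ref{lem:bijection quasi-Hermitian}, and Lemma~\ref{lem:Santiunitary} was built precisely to make $K=\eta^{1/2}K_\eta\eta^{-1/2}$ a standard antiunitary, so that the similarity transformation automatically intertwines $K_\eta$ with $K$. The only care needed is to be explicit that the \emph{same} map $\mathcal{T}$ simultaneously handles the $\eta\to\mathds{1}$ and the $K_\eta\to K$ passages, which is the content of the short commutation computation above.
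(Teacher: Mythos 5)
Your proposal is correct and follows essentially the same route as the paper: the same conjugation map $\mathcal{T}(E)=\eta^{1/2}E\eta^{-1/2}$, with the Hermiticity, positivity, bijectivity and unit-effect properties inherited from Lemma~\ref{lem:bijection quasi-Hermitian}, and the only new work being the commutation check with $K=\eta^{1/2}K_\eta\eta^{-1/2}$. Your direct verification of $K\,\mathcal{T}(E)=\mathcal{T}(E)\,K$ is equivalent to the paper's computation of $K\,\mathcal{T}(E)\,K=\mathcal{T}(E)$, and your explicit check that the inverse map preserves the constraints is a harmless (and slightly more thorough) addition.
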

\begin{proof}
We prove the statement by constructing  a bijection explicitly. 
\begin{align}\label{eq:T again}
    \mathcal{T}:&\mathsf{Eff}\left(\mathrm{A}_{K_\eta,\eta}\right) \to \mathsf{Eff}\left(\mathrm{A}_{K,\mathds{1}}\right)\nonumber \\
    &E \mapsto \eta^{1/2}E\eta^{-1/2}.
\end{align}
In Lemma~\ref{lem:bijection quasi-Hermitian} we proved that $\mathcal{T}$ is a linear bijection, $\mathcal{T}\left(\mathds{1}\right)=\mathds{1}$, and that $\mathcal{T}\left(E\right)$ is Hermitian and such that $\mathbf{0}\leq \mathcal{T}\left(E\right) \leq \mathds{1}$, for all $\eta$-Hermitian effects $E$, hence also for all $\eta$-Hermitian effects $E$ that are also $K_\eta$-symmetric. We are only left to show that $\mathcal{T}\left(E\right)$ is $K$-symmetric.
\begin{align*}
   & K\mathcal{T}\left(E\right) K \\&= K\eta^{1/2}E\eta^{-1/2}K \nonumber \\
    &= K\eta^{1/2}K_\eta EK_\eta \eta^{-1/2}K \nonumber\\
    &= K\eta^{1/2}\left(\eta^{-1/2}K\eta^{1/2}\right) E\left(\eta^{-1/2}K\eta^{1/2}\right)\eta^{-1/2}K \nonumber \\
    &=\mathcal{T}\left(E\right).
\end{align*}
Here, in the second equality, we have used the fact that $E$ is $K_\eta$-symmetric, and in the third equality the definition of $K$.
\end{proof}
Thanks to this proposition, we have proved the main result in this section: 
a system with effects that are quasi-Hermitian and PT-symmetric 
is equivalent to a real quantum system.
We conclude our analysis of this new theory by
characterizing the matrix representation of states in $\mathsf{St}\left(\mathrm{A}_{K_\eta,\eta}\right)$
through the following theorem.
\begin{thm}
Each state $\mu \in \mathsf{St}\left(\mathrm{A}_{K_\eta,\eta}\right)$ can be represented by a $K_\eta$-symmetric $\eta$-density matrix.
\end{thm}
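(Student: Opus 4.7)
The plan is to follow the same dualisation template used in the proof of Theorem~\ref{thm:K-symmetric matrices}, leveraging the equivalence $\mathrm{A}_{K_\eta,\eta}\simeq\mathrm{A}_{K,\mathds{1}}$ established in the preceding proposition. By Lemma~\ref{lem:duality}, the isomorphism $\mathcal{T}:E\mapsto\eta^{1/2}E\eta^{-1/2}$ on effects induces a dual linear bijection $\mathcal{T}':\mathsf{St}\left(\mathrm{A}_{K,\mathds{1}}\right)\to\mathsf{St}\left(\mathrm{A}_{K_\eta,\eta}\right)$, so every $\mu\in\mathsf{St}\left(\mathrm{A}_{K_\eta,\eta}\right)$ arises uniquely as $\mu\left(E\right)=\nu\left(\eta^{1/2}E\eta^{-1/2}\right)$ for some $\nu\in\mathsf{St}\left(\mathrm{A}_{K,\mathds{1}}\right)$. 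By Theorem~\ref{thm:K-symmetric matrices}, this $\nu$ is represented by a $K$-symmetric density matrix $\sigma_\nu$.

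From there, I would read off the matrix representation of $\mu$ using cyclicity of the trace,
\[
\mu\left(E\right)=\mathrm{tr}\,\sigma_\nu\,\eta^{1/2}E\eta^{-1/2}=\mathrm{tr}\,\bigl(\eta^{-1/2}\sigma_\nu\eta^{1/2}\bigr)E,
\]
suggesting the natural candidate $\rho_\mu:=\eta^{-1/2}\sigma_\nu\eta^{1/2}$. Verifying that $\rho_\mu$ is an $\eta$-density matrix is essentially a carbon copy of the calculation in Proposition~\ref{lem:QHdual}: $\eta$-Hermiticity follows from a short manipulation, $\eta$-positivity from $\sigma_\nu\geq\mathbf{0}$ via the substitution $\left|\phi\right\rangle=\eta^{1/2}\left|\psi\right\rangle$, and unit trace from $\mu\left(\mathds{1}\right)=1$; uniqueness is inherited from Lemma~\ref{lem:linear algebra}.

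The only genuinely new ingredient, and the step I expect to be the main obstacle, is establishing the $K_\eta$-symmetry of $\rho_\mu$. Here I would invoke Lemma~\ref{lem:Santiunitary} in the rearranged form $K_\eta=\eta^{-1/2}K\eta^{1/2}$ and compute
\[
K_\eta\rho_\mu=\eta^{-1/2}K\sigma_\nu\eta^{1/2},\qquad\rho_\mu K_\eta=\eta^{-1/2}\sigma_\nu K\eta^{1/2},
\]
so that $K_\eta\rho_\mu=\rho_\mu K_\eta$ reduces to $K\sigma_\nu=\sigma_\nu K$, which is precisely the $K$-symmetry of $\sigma_\nu$ guaranteed by Theorem~\ref{thm:K-symmetric matrices}. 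A mild subtlety to watch for is that $K_\eta$ is antilinear, so one must check that pushing $\eta^{\pm 1/2}$ past $K_\eta$ is unambiguous; since $\eta^{\pm 1/2}$ are positive (hence linear with real spectrum) operators, this causes no difficulty, and the proof closes.
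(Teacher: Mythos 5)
Your proposal is correct and follows essentially the same route as the paper: the dual bijection $\mathcal{T}'$ from Lemma~\ref{lem:duality}, the candidate $\rho_\mu=\eta^{-1/2}\sigma_\nu\eta^{1/2}$ obtained by cyclicity of the trace, the appeal to Proposition~\ref{lem:QHdual} for the $\eta$-density-matrix properties, and the reduction of $K_\eta$-symmetry to the $K$-symmetry of $\sigma_\nu$ via $K_\eta=\eta^{-1/2}K\eta^{1/2}$. The only cosmetic difference is that the paper verifies $K_\eta\rho_\mu K_\eta=\rho_\mu$ (using $K_\eta^2=\mathds{1}$) rather than checking the commutator directly, which is equivalent.
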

\begin{proof}
As usual, we establish a linear bijection $\mathcal{T}': \mathsf{St}\left(\mathrm{A}_{K,\mathds{1}}\right) \to \mathsf{St}\left(\mathrm{A}_{K_\eta,\eta}\right) $ as per Lemma~\ref{lem:duality}. Take $\nu\in\mathsf{St}\left(\mathrm{A}_{K,\mathds{1}}\right)$. By Theorem~\ref{thm:K-symmetric matrices}, $\nu\left(F\right)=\mathrm{tr}\:\sigma_\nu F$, for $F\in\mathsf{Eff}\left(\mathrm{A}_{K,\mathds{1}}\right)$, where $\sigma_\nu$ is a $K$-symmetric density matrix. Then, for $E\in\mathsf{Eff}\left(\mathrm{A}_{K_\eta,\eta}\right)$, the construction, which is identical to Eq.~\eqref{eq: T quasi-Hermitian}, yields
\[
\mu\left(E\right)=\mathrm{tr}\:\eta^{-1/2}\sigma_{\nu}\eta^{1/2}E.
\]
Therefore, one choice for $\rho_\mu$ is again $\rho_\mu := \eta^{-1/2} \sigma_\nu \eta^{1/2}$. We already know that $\rho_\mu$ is an $\eta$-density matrix from the proof of Proposition~\ref{lem:QHdual}. We only need to show that $\rho_\mu$ is
$K_\eta$-symmetric.
\begin{align*}
    K_\eta \rho_\mu K_\eta &= \eta^{-1/2} K \eta^{1/2}  \left(\eta^{-1/2} \sigma_\nu \eta^{1/2}\right)\eta^{-1/2} K \eta^{1/2} \nonumber\\
    &= \eta^{-1/2} K \sigma_\nu  K \eta^{1/2} \nonumber\\
    &= \eta^{-1/2} \sigma_\nu \eta^{1/2} \nonumber\\
    &= \rho_\mu,
\end{align*}
where we have used the fact that $\sigma_\nu$ is $K$-symmetric.
\end{proof}
We have therefore shown that a GPT with $K_\eta$-symmetric and $\eta$-Hermitian effects  is non-trivial, unlike the theory introduced in Section~\ref{sec:GPT-PT}, and each system is equivalent to a real quantum system. This result holds for any choice of $K_\eta$-symmetry and
the metric operator $\eta$, as long as $K_\eta$ is an $\eta$-antiunitary operator.

\section{Conclusions}\label{sec:conclusions}
In this article, we conclusively answered the question of whether 
a consistent physical theory with PT-symmetric observables could extend standard quantum theory. Indeed, the development of PT-symmetric quantum theory was motivated by the conjecture that replacing the 
ad-hoc condition of Hermiticity of observables with the
physically meaningful constraint of PT-symmetry could lead to a non-trivial extension of standard quantum theory.
However, no such consistent extension of standard quantum mechanics based on PT-symmetric observables
has been formulated to date.

Two approaches for formulating a consistent PT-symmetric quantum mechanics, 
which could potentially result in an extension of standard quantum theory, have
been attempted in the literature.
The first approach leverages quasi-Hermiticity of unbroken PT-symmetric observables.
The quasi-Hermitian approach does not replace the Hermiticity constraint with 
PT-symmetry, but rather imposes Hermiticity on observables with respect to a different inner product.
If, on the one hand, this approach can provide a self-consistent theory, on the other hand, it is equivalent to standard quantum mechanics, and does not 
offer any extension. Another approach to a consistent formulation of PT-symmetric quantum theory
is based on Krein spaces. In contrast to quasi-Hermitian quantum theory,
whether the theories developed within this approach are self-consistent is still an open question.

In this article, we proposed an approach based instead on the framework of general probabilistic theories (GPTs). This framework is applicable 
to any theory that is probabilistic, and is commonly used for studying quantum mechanics 
and other physical theories. We showed that if PT-symmetry is the only constraint on the set of
observables, then the resulting theory has only a single, trivial state.
In a nutshell, the reason behind the set of states being extremely restricted is that
PT-symmetry is a weak constraint on the set of effects, and consequently the set of allowed effects 
is rather large. The dual to the set of effects, namely the set of states, is therefore rather small. 
In fact, the set of states is a singleton set, and therefore the smallest possible.
We conclude that PT-symmetric observables alone cannot
yield a non-trivial theory
that extends standard quantum mechanics

We then studied the consequences of imposing quasi-Hermiticity 
on the set of observables. If all observables are quasi-Hermitian and not necessarily PT-symmetric, 
we found the resulting system to be equivalent to a standard quantum system.
 While this equivalence is known in the literature, our
approach using general probabilistic theories recovers this result from first principles
with no assumptions on the state space. 
We also investigated the GPT
in which observables are both PT-symmetric and quasi-Hermitian. We found these systems to 
be equivalent to real quantum theory systems. As real quantum theory is a restriction of standard
quantum theory \cite{Hickey,Scandolo-real1,Scandolo-real2}, this approach too fails to provide an extension of standard quantum mechanics.
Moreover, real quantum theory also faces the additional complication that the generator of time evolution  is not an observable of the theory, as noted in Ref.~\cite{Barnum-interference}.

In conclusion, neither PT-symmetry nor quasi-Hermiticity of observables 
leads to an extension of standard quantum mechanics. What possible constraints, if any,
could lead to such a meaningful extension remains an intriguing open question. 

\begin{acknowledgments}
SK is grateful for
 an Alberta Innovates Graduate
Student Scholarship. AA acknowledges support by Killam Trusts (Postdoctoral Fellowship). CMS acknowledges the support of the Natural Sciences and Engineering Research Council of Canada (NSERC) through the Discovery Grant ``The power of quantum resources'' RGPIN-2022-03025 and the Discovery Launch Supplement DGECR-2022-00119.
\end{acknowledgments}

\bibliographystyle{apsrev4-2}
\bibliography{ref.bib}


\end{document}